\documentclass{extarticle}
\usepackage{geometry}
\geometry{
    a4paper,
    left=3.5cm,
    right=3.5cm,
    top=3.5cm,
    bottom=3.5cm
} 

\overfullrule = 10pt

\title{\textbf{Bribery Can Get Harder in Structured Multiwinner Approval
Election}}
\author{%
  Bartosz Kusek,\textsuperscript{\rm 1}
  Robert Bredereck,\textsuperscript{\rm 2}
	Piotr Faliszewski,\textsuperscript{\rm 1}\\
  Andrzej Kaczmarczyk,\textsuperscript{\rm 1}
  Du{\v{s}}an Knop\textsuperscript{\rm 3}
  \vspace{0.3cm}\\
	\begin{minipage}{.9\textwidth}
		\centering\normalsize
    \textsuperscript{\rm 1} AGH University, Krak\'{o}w, Poland\\
    \textsuperscript{\rm 2} Institut für Informatik, TU Clausthal,
	  Clausthal-Zellerfeld, Germany\\
    \textsuperscript{\rm 3} Czech Technical University in Prague, Prague, Czech
	  Republic
  \end{minipage}\vspace{0.3cm}\\
	\begin{minipage}{.9\textwidth}
		\centering\small
		\texttt{kussy.kusy@gmail.com, robert.bredereck@tu-clausthal.de,}\\
		\texttt{faliszew@agh.edu.pl, andrzej.kaczmarczyk@agh.edu.pl,}\\
	  \texttt{dusan.knop@fit.cvut.cz}
  \end{minipage}
}
\date{}

\usepackage[noend,ruled,vlined,linesnumbered]{algorithm2e}

\usepackage{url}
\urlstyle{same}
\usepackage{natbib} %

\usepackage{graphicx}
\usepackage{amsmath}
\usepackage{amsfonts}
\usepackage{amsthm}
\usepackage{booktabs}
\usepackage{multirow}
\usepackage{thmtools} 
\usepackage{thm-restate}
\usepackage{wrapfig}

\sloppy

\usepackage{subcaption}
\usepackage{bm}

\usepackage{dsfont}
\usepackage{mathtools}

\usepackage{comment}
\usepackage{pgfplots}
\pgfplotsset{compat=newest}
\usepackage{cleveref}
\usepackage{subcaption}
\usetikzlibrary{matrix}
\usepackage[textsize=scriptsize]{todonotes}

\newcommand{\myparagraph}[1]{\smallskip \noindent\textbf{#1}\;}

\usepackage{nicefrac}

\newtheorem{theorem}{Theorem}

\newtheorem{definition}{Definition}
\newtheorem{example}{Example}

\newtheorem{remark}{Remark}

\newcommand{\removememaybe}[1]{}

\setcounter{topnumber}{2}
\setcounter{bottomnumber}{2}
\setcounter{totalnumber}{4}     %
\setcounter{dbltopnumber}{2}    %

\newcommand{\calS}{\mathcal{S}}
\newcommand{\calT}{\mathcal{T}}
\newcommand{\np}{{\mathrm{NP}}}

\newcommand{\p}{{\mathrm{P}}}

\DeclareMathOperator{\score}{score}
\DeclareMathOperator{\cost}{cost}
\DeclareMathOperator{\avscore}{score}

\newcommand{\pprec}{\mathrm{prec}}
\newcommand{\ssucc}{\mathrm{succ}}

\newcommand{\Winner}{\textsc{Winner}}
\newcommand{\bordaOWAwinner}[1][]{%
	$\beta$%
	\ifx&#1&%
	\else
	(#1)%
	\fi%
	-\Winner%
}%

\allowdisplaybreaks

\providecommand{\np}{\ensuremath{\mathrm{NP}}}

\providecommand{\p}{\ensuremath{\mathrm{P}}}

\providecommand{\score}{{\mathrm{score}}}
\providecommand{\cost}{{\mathrm{cost}}}
\providecommand{\av}{{\mathrm{AV}}}

\newcommand{\scoreof}[1]{\ensuremath{s(#1)}}

\newcommand{\first}{}    
\def\first/{red}
\newcommand{\second}{}    
\def\second/{blue}
\newcommand{\third}{}    
\def\third/{black}
\newcommand{\fourth}{}    
\def\fourth/{green}
\clearpage{}%

\usepackage{etoolbox}
\newcommand{\appsymb}{$\lozenge$}

\newcommand{\appref}[1]{{\appsymb}}

\begin{document}
\maketitle
\begin{abstract}
We study the complexity of constructive bribery in the context of structured
multiwinner approval elections.  Given such an election, we ask
whether a certain candidate can join the winning committee by
adding, deleting, or swapping approvals, where each such action
comes at a cost and we are limited by a budget. We assume our
elections to either have the candidate interval or the voter
interval property, and we require the property to hold also after
the bribery. While structured elections usually make manipulative
attacks significantly easier, our work also shows examples of
the opposite behavior. We conclude by presenting preliminary insights regarding
the destructive variant of our problem.\bigskip

\noindent \textbf{Keywords:} bribery, structured domain, approval elections,
complexity reversal
\end{abstract}

  \section{Introduction}\label{sec:intro}

  We study the complexity of bribery under the multiwinner approval
  rule, in the case where the voters' preferences are structured.
	Specifically, we use the bribery model of
	\citet{fal-sko-tal:c:bribery-measure}, where one can either add, delete, or
	swap approvals, and we consider \emph{candidate interval} and \emph{voter
	interval} preferences~\citep{elk-lac:c:approval-sp-sc}.

  In multiwinner elections, %
  the voters express their preferences over the available candidates
  and
  use this information to select a winning committee (i.e., a
  fixed-size subset of candidates).
  We focus on one of the simplest and most common scenarios, where each
  voter specifies the approved candidates, %
  and those with
  the highest numbers of approvals form the committee. Such elections
  are used, e.g., to choose city councils, boards of trustees, or to
  shortlist job candidates. Naturally, there are many other rules and
  scenarios, but they do not appear in practice as often as this
  simplest one.  For more details on multiwinner voting, we point the
  readers to the overviews of
  \citet{fal-sko-sli-tal:b:multiwinner-voting} and
  \citet{lac-sko:t:approval-survey}.

  In our scenario, we are given an election, including the contents of
  all the votes, and, depending on the variant, we can either add,
  delete, or swap approvals, but each such action comes at a cost. Our
  goal is to find a cheapest set of actions that ensure that a given
  candidate joins the winning committee.
  Such problems, where we modify the votes to ensure a certain
  outcome, are known under the umbrella name of \emph{bribery}, and
  were first studied by \citet{fal-hem-hem:j:bribery}, whereas our specific variant
  is due to \citet{fal-sko-tal:c:bribery-measure}. Historically, bribery
  problems indeed aimed to model vote buying, but currently more
	benign interpretations prevail. For example,
	\citet{fal-sko-tal:c:bribery-measure} suggest using
  the cost of bribery as a measure of a candidate's success:
	A~candidate who did not win, but can be put into the committee at a low cost,
	certainly did better than one whose bribery is expensive. In particular, since
	our problem is used for post-election analysis, it is natural to assume that
	we know all the votes. For other similar interpretations, we point, e.g., to
	the works of \citet{xia:c:margin-of-victory},
	\citet{shi-yu-elk:c:robustness}, \citet{bre-fal-kac-nie-sko-tal:j:robustness},
	\citet{boe-bre-fal-nie:c:counting-swap-bribery}, or
	\citet{bau-hog:c:counting-bribery}.
  \citet{fal-rot:b:control-bribery} give a more general overview of bribery problems.

  We assume that our elections either satisfy the candidate interval
  (CI) or the voter interval (VI)
  property~\citep{elk-lac:c:approval-sp-sc}, which %
  correspond to the classic notions of
  single-peakedness~\citep{bla:b:polsci:committees-elections} and
  single-crossingness~\citep{mir:j:single-crossing,rob:j:tax} from the
  world of ordinal elections. Briefly put, the CI property means that
  the candidates are ordered and each voter approves some interval of
  them, whereas the VI property requires that the voters are ordered
  and each candidate is approved by some interval of voters. For
  example, the CI assumption can be used to model political elections,
  where the candidates appear on the left-to-right spectrum of
  opinions and the voters approve those, whose opinions are close
  enough to their own.  Importantly, we require our elections to have
  the CI/VI property also after the bribery; this approach is standard
  in bribery problems with structured
  elections~\citep{bra-bri-hem-hem:j:sp2,men-lar:c:bribery-sp-hard,elk-fal-gup-roy:c:swap-shift-sp-sc},
  as well as in other problems related to manipulating election
  results~\citep{wal:c:uncertainty-in-preference-elicitation-aggregation,fal-hem-hem-rot:j:single-peaked-preferences,fit-hem:c:ties-sp-manipulation-bribery}
  (these references are examples only).

  \begin{example}\label{ex:1}
    Let us consider a hotel in a holiday resort. The hotel has its
    base staff, but each month it also hires some additional help. For
    the coming month, the expectation is to hire extra staff for $k$
    days. Naturally, they would be hired for the days when the hotel
    is most busy (the decision to request additional help is made a
    day ahead, based on the observed load; the $k$ days do not need to be consecutive).  Since hotel bookings are
    typically made in advance, one knows which days are expected to be
    most busy.  However, some people will extend their stays, some
    will leave early, and some will have to shift their stays.  Thus
    the hotel managers would like to know which days are likely to
    become the busiest ones after such changes: Then they could inform
    the extra staff as to when they are expected to be needed, and
    what changes in this preliminary schedule might happen.
    Our bribery
    problem (for the CI setting) captures exactly the problem that the
    managers want to solve: The days are the candidates, $k$ is the
    committee size, and the bookings are the approval votes (note that each booking must regard a consecutive set of days). Prices of
    adding, deleting, and moving approvals correspond to the
    likelihood that a particular change actually happens (the managers usually know which changes are more or less likely). 
    Since the bookings must be consecutive,  the election has
    to have the CI property also after the bribery. The managers can
    solve such bribery problem for each of the days and see
    which ones can most easily be among the $k$ busiest ones. 

    Note that the value of $k$ can be estimated from previous experience, be limited by the budget for the salaries for the additional workers, depend on the
    predicted workload, be subject to the availability of the additional workers, and the like. The managers may even want to solve several instances of the problem, with different values of $k$.
  \end{example}

  \begin{example}\label{ex:2}
    For the VI setting, let us consider a related scenario.
    There is a team of archaeologists who booked a set of excavation
    sites, each for some consecutive number of days (they work on several sites in parallel). The team may want to add
    some extra staff to those sites that require most working 
    days. However, as in the previous example, the bookings might get
    extended or shortened. The team's manager may use bribery to evaluate
    how likely it is that each of the sites becomes one of the most
    work-demanding ones. In this case, the days are the voters, and the
    sites are the candidates.
  \end{example}

  There are two main reasons why structured elections are studied.
  Foremost, as in the above examples, sometimes they simply capture the
  exact problem at hand.
  Second, many problems that are intractable in general, become
  polynomial-time solvable if the elections are structured. Indeed,
  this is the case for many $\np$-hard winner-determination
  problems~\citep{bet-sli-uhl:j:mon-cc,elk-lac:c:approval-sp-sc,pet-lac:j:spoc}
  and for various problems where the goal is to make some candidate a
  winner~\citep{fal-hem-hem-rot:j:single-peaked-preferences,mag-fal:j:sc-control},
  including some bribery
  problems~\citep{bra-bri-hem-hem:j:sp2,elk-fal-gup-roy:c:swap-shift-sp-sc}. There
  are also some problems that stay intractable even for structured
  elections~\citep{elk-fal-sch:j:fallback-shift,yan:c:borda-control-sp}\footnote{These
    references are not complete and are meant as examples.}
  as well as examples of complexity reversals,
  where assuming structured preferences turns a polynomial-time
  solvable problem into an intractable one. However, such reversals are 
  rare and, to the best of our knowledge, so far were only observed by
  \citet{men-lar:c:bribery-sp-hard}, for the case
  of weighted elections with three candidates (but see also the work
  of \citet{fit-hem:c:ties-sp-manipulation-bribery}, who find
  complexity reversals that stem from replacing total ordinal votes
  with ones that include ties).  \smallskip

  \myparagraph{Our Contribution.}  We provide an almost complete
  picture of the complexity of bribery by either adding, deleting, or
  swapping approvals under the multiwinner approval voting rule, for
  the case of CI and VI elections, assuming either that each bribery
  action has identical unit price or that they can be priced
  individually (see Table~\ref{tab:results}). By comparing our results
  to those for the unrestricted setting, provided by
  \citet{fal-sko-tal:c:bribery-measure}, we find that any
  combination of tractability and intractability in the structured and
  unrestricted setting is possible. For example:
  \begin{enumerate}
  \item Bribery by adding approvals is solvable in polynomial time
    irrespective if the elections are unrestricted or have the CI or VI
    properties.
  \item Bribery by deleting approvals (where each deleting action is
    individually priced) is solvable in polynomial time in the
    unrestricted setting, but becomes $\np$-hard for CI elections (for
    VI ones it is still in $\p$).
  
  \item Bribery by swapping approvals only to the designated candidate (with
    individually priced actions) is $\np$-hard in the unrestricted
    setting, but becomes polynomial-time solvable both for CI and VI
    elections.
  
  \item Bribery by swapping approvals (where each action is individually
    priced and we are not required to swap approvals to the designated
    candidate only) is $\np$-hard in each of the considered settings.
  \end{enumerate}
  We largely focus on the constructive setting, where the goal is to
  ensure that some candidate belongs to at least one winning committee
  (indeed, all the results above are for this setting). However, we
  also give a glimpse of what happens in the destructive setting,
  where we want to ensure that a given candidate does not belong to
  any winning committees. In a certain sense, in this case we also
  observe a form of ``reversal.'' Typically, destructive variants of
  bribery (and related) problems are at least as easy to work with
  as their constructive counterparts, and lead to more positive
  results. In our case---albeit this is mostly an intuitive
  feeling---the situation for CI elections is the opposite. Obtaining 
  the results for the destructive setting seems more challenging and
  leads to less satisfying theorem statements (e.g., we need less
  appealing prices) than in the constructive setting.\smallskip

  \noindent\textbf{Possibility of Complexity Reversals.}\;
  So far, most of the problems studied for structured elections were
  subproblems of the unrestricted ones. For example, a winner
  determination algorithm that works for all elections, clearly also
  works for the structured ones and complexity reversal is impossible.
  The case of bribery is different because, by assuming structured
  elections, not only do we restrict the set of possible inputs, but
  also we constrain the possible actions.  Yet, scenarios where
  bribery is tractable are rare, and only a handful of papers
	considered bribery in structured domains (we mention those of
	\citet{bra-bri-hem-hem:j:sp2}, \citet{fit-hem:c:ties-sp-manipulation-bribery},
	\citet{men-lar:c:bribery-sp-hard},
	\citet{elk-fal-gup-roy:c:swap-shift-sp-sc}), so opportunities for
  observing complexity reversals were, so far, very limited. We show
  several such reversals, obtained for very natural settings.

  \begin{table}
    \centering
    \setlength{\tabcolsep}{3pt}
    \scalebox{0.8}{
    \begin{tabular}{r|cc|cc|cc}
      \toprule
                             & \multicolumn{2}{c|}{Unrestricted} & \multicolumn{2}{c|}{Candidate Interval (CI)}        & \multicolumn{2}{c}{Voter Interval (VI)}                                                                                                               \\
      {\small (prices)}
                             & \multicolumn{1}{c}{\small (unit)}
                             & \multicolumn{1}{c|}{\small (any)}
                             & \multicolumn{1}{c}{\small (unit)}
                             & \multicolumn{1}{c|}{\small (any)}
                             & \multicolumn{1}{c}{\small (unit)}
                             & \multicolumn{1}{c}{\small (any)}                                                                                                                                                                  \\
      \midrule
      \textsc{AddApprovals}       & $\p$                                           & $\p$                                           & $\p$                                           & $\p$                                           & $\p$                                           & $\p$                                           \\[-3pt]
                             & \multicolumn{2}{c|}{\scriptsize{Faliszewski et al.~\citeyearpar{fal-sko-tal:c:bribery-measure}}} 
                             & \scriptsize{[Thm.~\ref{thm:add-approvals-ci}]} & \scriptsize{[Thm.~\ref{thm:add-approvals-ci}]} & \scriptsize{[Thm.~\ref{thm:add-approvals-vi}]} & \scriptsize{[Thm.~\ref{thm:add-approvals-vi}]} \\
      \midrule
      \textsc{DelApprovals}    & $\p$ & $\p$ & \multirow{2}{*}{?}                                              & $\np$-com.                                     & $\p$                                           & $\p$                                           \\
      & \multicolumn{2}{c|}{\scriptsize{Faliszewski et al.~\citeyearpar{fal-sko-tal:c:bribery-measure}}}& 
      & \scriptsize{[Thm.~\ref{thm:delete-approvals-ci}]} & \scriptsize{[Thm.~\ref{thm:delete-approvals-vi}]} & \scriptsize{[Thm.~\ref{thm:delete-approvals-vi}]} \\
      \midrule
      \textsc{SwapApprovals} & $\p$ & $\np$-com. & $\p$                                           & $\p$                                           & $\p$                                           & $\p$                                           \\[-3pt]
      \textsc{to p}                       &\multicolumn{2}{c|}{\scriptsize{Faliszewski et al.~\citeyearpar{fal-sko-tal:c:bribery-measure}}}& \scriptsize{[Thm.~\ref{thm:swap-approvals-to-p-ci}]} & \scriptsize{[Thm.~\ref{thm:swap-approvals-to-p-ci}]} & \scriptsize{[Thm.~\ref{thm:swap-approvals-to-p-vi}]} & \scriptsize{[Thm.~\ref{thm:swap-approvals-to-p-vi}]} \\[2mm]
      \textsc{SwapApprovals}      & $\p$ & $\np$-com. & $\np$-com.                                      & $\np$-com.                                      & \multirow{2}{*}{?}                                              & $\np$-com.                                     \\[-3pt]
                             &\multicolumn{2}{c|}{\scriptsize{Faliszewski et al.~\citeyearpar{fal-sko-tal:c:bribery-measure}}}& \scriptsize{[Thm.~\ref{thm:swap-approvals-ci}]} & \scriptsize{[Thm.~\ref{thm:swap-approvals-ci}]} &  & \scriptsize{[Thm.~\ref{thm:vi-swaps}]} \\
      \bottomrule
    \end{tabular}}
  
    \caption{Our results for the CI and VI domains,
      together with those of \protect\citet{fal-sko-tal:c:bribery-measure} for
      the unrestricted setting. \textsc{SwapApprovals to p} refers to the
      problem where each action has to move an approval to the preferred
      candidate.}
    \label{tab:results}
  \end{table}

\section{Preliminaries}

For a positive integer $t$, we write $[t]$ to mean the set
$\{1, \ldots, t\}$. By writing $[t]_0$ we mean the set
$[t] \cup \{0\}$.

\myparagraph{Approval Elections.}
An \emph{approval election} $E = (C,V)$ consists of a set of
candidates $C = \{c_1, \ldots, c_m\}$ and a collection of voters
$V = \{v_1, \ldots, v_n\}$.  Each voter $v_i \in V$ has \emph{an
  approval ballot} (or, equivalently, \emph{an approval set}) which
contains the candidates that $v_i$ approves.  We write $v_i$ to refer
both to the voter and to his or her approval ballot; the meaning
will always be clear from the context.

A \emph{multiwinner voting rule} is a function $f$ that given an
election $E = (C,V)$ and a committee size $k \in [|C|]$ outputs a
nonempty family of winning committees (where each committee is a
size-$k$ subset of $C$).
We disregard the issue of tie-breaking and assume all winning
committees to be equally worthy,
i.e., we adopt the nonunique winner model.

Given an election $E = (C,V)$, we let the approval score of a candidate
$c \in C$ be the number of voters that approve~$c$, and we denote it as
$\avscore_E(c)$. The approval score of a committee $S \subseteq C$ is
$\avscore_E(S) = \sum_{c \in S} \avscore_E(c)$.  Given an election~$E$
and a committee size~$k$, the \emph{multiwinner approval voting} rule,
denoted $\av$, outputs all size-$k$ committees with the highest
approval score.
Occasionally we also consider the \emph{single-winner approval rule},
which is defined in the same way as its multiwinner variant, except
that the committee size is fixed to be one. For simplicity, in this
case we assume that the rule returns a set of tied winners (rather
than a set of tied size-$1$ winning committees).

\myparagraph{Structured Elections.}  We focus on elections where the
approval ballots satisfy either the \emph{candidate interval (CI)} or
the \emph{voter interval (VI)}
properties~\citep{elk-lac:c:approval-sp-sc}:
\begin{enumerate}
\item An election has the CI
  property (is a CI election) if there is an ordering of the candidates
  (called the \emph{societal axis}) such that each approval ballot
  forms an interval with respect to this ordering.

\item An election has the VI %
  property (is a VI election) if there is an ordering of the voters such
  that each candidate is approved by an interval of the voters (for
  this ordering).
\end{enumerate}
Given a CI election, we say that the voters have CI ballots or,
equivalently, CI preferences; we use analogous conventions for the VI
case. As observed by \citet{elk-lac:c:approval-sp-sc}, there are
polynomial-time algorithms that test if a given election is CI or VI
and, if so, provide appropriate orders of the candidates or voters;
these algorithms are based on solving the \emph{consecutive ones}
problem~\citep{boo-lue:j:consecutive-ones-property}.

\myparagraph{Notation for CI Elections.}
Let us consider a candidate set $C = \{c_1, \ldots, c_m\}$ and a
societal axis $\rhd = c_1 c_2 \cdots c_m$.  Given two candidates
$c_i, c_j$, where $i \leq j$, we write $[c_i,c_j]$ to denote the
approval set $\{c_i, c_{i+1}, \ldots, c_j\}$.

\myparagraph{Bribery Problems.}
We focus on the variants of bribery in multiwinner approval elections
defined by~\citet{fal-sko-tal:c:bribery-measure}.
Let $f$ be a multiwinner voting rule and let \textsc{Op} be one of
\textsc{AddApprovals}, \textsc{DelApprovals}, and
\textsc{SwapApprovals} operations (in our case $f$ will either be
$\av$ or its single-winner variant). In the $f$-\textsc{Op-Bribery}
problem we are given an election $E = (C,V$), a committee size~$k$, a
preferred candidate $p$, and a nonnegative integer $B$ (the
budget). We ask if it is possible to perform at most $B$ unit
operations of type \textsc{Op}, so that $p$ belongs to at least one
winning committee (this is the constructive variant of the problem; requiring
only minor adaptions in proofs, presented in~Appendix~\ref{app:other_models},
all our results for this variant also hold if we require that $p$ belongs to all
winning committees):
\begin{enumerate}
\item For \textsc{AddApprovals}, a unit operation adds a given
  candidate to a given voter's ballot.
\item For \textsc{DelApprovals}, a unit operation removes a given
  candidate from a given voter's ballot.
\item For \textsc{SwapApprovals}, a unit operation replaces a given
  candidate with another one in a given voter's ballot.
\end{enumerate}
Like \citet{fal-sko-tal:c:bribery-measure}, we also study the
variants of \textsc{AddApprovals} and \textsc{SwapApprovals} problems
where each unit operation must involve the preferred candidate.

We are also interested in the priced variants of the above problems,
where each unit operation comes at a cost that may depend both on the
voter and the particular affected candidates;
we ask if we can achieve our goal by performing operations of total
cost at most~$B$. We distinguish the priced variants %
by putting a dollar sign in front of the operation type. For example,
\textsc{\$AddApprovals} means a variant where adding each candidate to
each approval ballot has an individual cost.

\myparagraph{Bribery in Structured Elections.}
We focus on the bribery problems where the elections have either the CI
or the VI property. For example, in the
\textsc{AV-\$AddApprovals-CI-Bribery} problem the input election has
the CI property (under a given societal axis) and we ask if it is
possible to add approvals with up to a given cost so that (a) the
resulting election has the CI property for the same societal axis, and
(b) the preferred candidate belongs to at least one winning committee.
The VI variants are defined analogously (in particular, the voters'
order witnessing the VI property is given %
and the election must still have the VI property with respect to this
order after the bribery).
The convention that the election must have the same structural
property before and after the bribery, and %
the fact that the
order witnessing this property is part of the input, is standard in
the literature; see, e.g., the works of
\citet{fal-hem-hem-rot:j:single-peaked-preferences},
\citet{bra-bri-hem-hem:j:sp2},
\citet{men-lar:c:bribery-sp-hard}, and
\citet{elk-fal-gup-roy:c:swap-shift-sp-sc}.
Further, it also follows naturally from some applications (as in the scenarios
from Examples~\ref{ex:1} and~\ref{ex:2}).

\myparagraph{Computational Problems.}  For a graph $G$, by $V(G)$ we
mean its set of vertices and by $E(G)$ we mean its set of edges. A
graph is cubic if each of its vertices is connected to exactly three
other ones.
Our $\np$-hardness proofs rely on reductions from variants of the
\textsc{Independent Set} and \textsc{X3C} problems, both known to be
$\np$-complete~\citep{gar-joh:b:int,gon:j:x3c}.

\begin{definition}
  In the \textsc{Cubic Independent Set} problem 
  we are given a cubic graph $G$ and an integer $h$; we ask if $G$
  has an independent set of size $h$ (i.e., a set of $h$ vertices such that no two of them are connected).
\end{definition}
\begin{definition}
  In the \textsc{Restricted Exact Cover by 3-Sets} problem (\textsc{RX3C}) we are given a universe
  $X$ 
  of $3n$ elements and a family
  $\calS$ 
  of $3n$ size-$3$ subsets of
  $X$. Each element from $X$ appears in exactly three sets from
  $\calS$. We ask if it is possible to choose $n$ sets from $\calS$
  whose union is $X$.
\end{definition}

\section{Adding Approvals}
\label{sec:adding}

For the case of adding approvals, all our bribery problems (priced and
unpriced, both for the CI and VI domains) remain solvable in
polynomial time. Yet, compared to the unrestricted setting, our
algorithms require more care. For example, in the unrestricted case it
suffices to simply add approvals for the preferred candidate~\citep{fal-sko-tal:c:bribery-measure} (choosing
the voters where they are added in the order of increasing costs for
the priced variant); a similar
approach works for the VI case, but with a different ordering of the
voters.

\begin{theorem}%
\label{thm:add-approvals-vi}
  \textsc{AV-\$AddApprovals-VI-Bribery} $\in \p$.
\end{theorem}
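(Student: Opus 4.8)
The plan is to reduce the bribery to a one-dimensional optimization solvable by prefix sums. First I would argue that we may restrict attention to operations that add approvals \emph{only} for the preferred candidate $p$: given any bribery that makes $p$ a winner within budget, deleting from it every added approval of a candidate $c\neq p$ leaves $p$'s score unchanged, can only lower competitors' scores (so $p$'s relative standing only improves), and cannot increase the cost. Moreover, adding approvals for $p$ changes only $p$'s set of approvers and leaves every other candidate's approval interval untouched; hence the VI property for the given voter order survives the bribery if and only if $p$'s set of approvers stays an interval. Thus the feasible and useful operations are exactly those that extend $p$'s current approval interval outward, to the left and/or to the right.

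Next I would fix the required target score. Since the scores of all candidates other than $p$ are unaffected, let $\targetScore$ be the $k$-th largest value among $\avscore_E(c)$ over $c\neq p$ (and $\targetScore=0$ if there are fewer than $k$ such candidates). In the nonunique-winner model, $p$ lies in some winning committee exactly when at most $k-1$ other candidates strictly exceed its score, i.e.\ when $\avscore_E(p)\geq\targetScore$. Writing $s^0$ for $p$'s current score, we must therefore add at least $d=\max(0,\targetScore-s^0)$ approvals for $p$; because all costs are nonnegative, a routine exchange argument shows that some optimal solution adds exactly $d$ of them.

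The core step is to find the cheapest way to add $d$ approvals for $p$ subject to contiguity. If $p$ currently approves the interval $[v_a,v_b]$, the newly bought approvers must form a left prefix $v_{a-1},v_{a-2},\dots$ together with a right prefix $v_{b+1},v_{b+2},\dots$; I would precompute, along each of these two ``rays'', the prefix sums of the add-costs, and then over every split $d=\ell+r$ take the cheaper arrangement, reporting the minimum. If $p$ currently approves no voter, the bought approvers must instead form a single window of $d$ consecutive voters, which I would handle by sliding a width-$d$ window and taking the minimum total add-cost. In both cases the optimization is linear after the scores are computed, and the instance is a yes-instance if and only if this minimum cost is at most the budget $B$. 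Feasibility is never in doubt, since extending $p$'s interval to all $n$ voters already yields score $n\geq\targetScore$.

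The main obstacle is exactly this contiguity constraint: unlike the unrestricted case, where one simply buys the $d$ cheapest approvals for $p$, the VI structure forces the purchased approvals to grow outward as prefixes from $p$'s interval, so a globally cheap voter is useless unless every voter between it and $p$'s current interval is bought as well. Pinning the optimum down to exactly $d$ added approvals via the nonnegativity exchange argument, and separating the empty-interval case (a sliding window) from the nonempty one (two rays), are the points needing care; the remainder is routine prefix-sum bookkeeping.
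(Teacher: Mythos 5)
Your proposal is correct and follows essentially the same route as the paper's proof: restrict to adding approvals for $p$ only (which keeps the VI property for all other candidates automatically), observe that the added approvals must extend $p$'s approval interval as a left and a right prefix, and minimize over all splits of the required number $d$ of new approvals between the two sides. Your write-up is in fact somewhat more explicit than the paper's (the target-score characterization, the exchange argument for adding exactly $d$ approvals, and the sliding-window treatment of the case where $p$ initially has no approvers), but these are refinements of the same algorithm rather than a different approach.
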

\begin{proof}
  Consider an input with election $E = (C,V)$, committee size $k$,
  preferred candidate $p$, and budget $B$. Without loss of generality,
  we assume that $V = \{v_1, \ldots, v_n\}$ and the order witnessing
  the VI property is $v_1 \rhd v_2 \rhd \cdots \rhd v_n$. We note that
  it is neither beneficial nor necessary to ever add approvals for
  candidates other than $p$. Let $v_i, v_{i+1}, \ldots, v_j$ be the
  interval of voters that approves $p$, and let $s$ be the lowest
  number of additional approvals that $p$ needs to obtain to become a
  member of some winning committee (note that $s$ is easily computable
  in polynomial time). Our algorithm proceeds as follows: We consider
  all nonnegative numbers $s_\ell$ and $s_r$ such that
  (a)~$s = s_\ell + s_r$, (b)~$i - s_\ell \geq 1$, and
  (c)~$j+s_r \leq n$, and for each of them we compute the cost of
  adding an approval for $p$ to voters $v_{i-s_\ell}, \ldots, v_{i-1}$
  and $v_{j+1}, \ldots, v_{j+s_r}$.  We choose the pair that generates
  lowest cost and we accept if this cost is at most $B$. Otherwise we
  reject.

  The polynomial running time follows directly. Correctness is
  guaranteed by the fact that we need to maintain the VI property and
  that it suffices to add approvals for $p$ only.
\end{proof}

The CI case introduces a different complication. Now, adding an approval for the preferred
candidate in a given vote also requires adding approvals for all 
those between him or her and the original approval set.
Thus, in addition to bounding the bribery's cost, we also need to 
track the candidates whose scores increase. %

\begin{theorem}\label{thm:add-approvals-ci}
  \textsc{AV-\$AddApprovals-CI-Bribery} $\in \p$.
\end{theorem}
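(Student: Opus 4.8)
The plan is to combine a top-level enumeration with a family of interval-flavored dynamic programs. Fix the given societal axis $\rhd = c_1 \cdots c_m$ and let $c_\pi = p$. The first observation is that an optimal bribery only ever adds approvals in order to feed $p$ more votes: since we can only add (never delete), the sole way to help $p$ is to raise its own score, and raising any other candidate's score can only hurt. Moreover, because each bribed ballot must remain an interval, adding $p$ to a voter $v$ approving $[c_a,c_b]$ with $b<\pi$ forces us to add exactly $c_{b+1},\dots,c_\pi$ (the minimal extension), and symmetrically on the right. Hence every useful unit of bribery is ``give $p$ one extra vote from some voter $v$ at the forced filler cost $\sum_{t=b+1}^{\pi}\cost(v,c_t)$, raising by one, as a side effect, the score of every candidate strictly between $v$'s interval and $p$.''

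Next I would guess the final score $S$ of $p$ (at most $n+1$ values). Having fixed $S$, a candidate $c\neq p$ whose initial score already exceeds $S$ is irrevocably above $p$; let $a$ be their number and reject this $S$ when $a\geq k$, since $p$ lies in a winning committee exactly when at most $k-1$ candidates strictly beat it (the ``$\le k-1$'' boundary, with ties at $p$'s own score allowed, is the one place that needs care). Every remaining candidate $c$ has a slack $S-\avscore_E(c)\geq 0$ and is \emph{pushed} above $p$ precisely when the number of bribes whose filler covers $c$ exceeds this slack. So for this $S$ the task is: select bribes giving $p$ exactly $S-\avscore_E(p)$ new votes, of minimum total cost, so that at most $k-1-a$ candidates get pushed above $S$; we then minimize over $S$ and compare with $B$.

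The structural key is that all left fillers are suffixes of the block $c_1,\dots,c_{\pi-1}$ abutting $p$: a voter with right endpoint $\rho$ covers exactly $c_{\rho+1},\dots,c_{\pi-1}$. Thus if among the chosen left bribes $P_t$ have right endpoint at most $t$, then $c_t$ receives exactly $P_{t-1}$ extra approvals, and $P_0\le P_1\le\cdots$ is nondecreasing. This lets me run a left-to-right dynamic program over the left block whose state is (current position $t$, the value $P_{t-1}$, the number of left candidates pushed so far); at each step I decide how many further voters with right endpoint $t$ to bribe, always the cheapest ones via precomputed prefix sums of the filler costs grouped by right endpoint, and I increment the pushed counter whenever $P_{t-1}$ exceeds the slack of $c_t$. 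The program outputs a table $L[q_L][j_L]$, the minimum cost of $q_L$ left bribes pushing at most $j_L$ left candidates; a mirror-image program yields $R[q_R][j_R]$, and voters whose filler is empty (empty ballots, or ballots ending right next to $p$) form a third, side-effect-free category. I combine these by a convolution over $q_L+q_R+q_N=S-\avscore_E(p)$ and $j_L+j_R\le k-1-a$.

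Every step is polynomial (there are $O(n)$ choices of $S$, each program has $O(mnk)$ states with $O(n)$ transitions, and the convolution is polynomial), giving membership in $\p$. The main obstacle, and exactly the reason the naive ``add $p$ as cheaply as possible'' strategy that settles the VI case fails here, is the forced filler effect: a cheap way of feeding $p$ a vote may simultaneously shove several rivals above it. Coping with this is what forces the two-dimensional dynamic program that tracks cost and the number of candidates pushed past the threshold at once; the nestedness of the fillers, all abutting $p$, is precisely what keeps that program polynomial.
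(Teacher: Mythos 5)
Your proposal is correct and follows essentially the same route as the paper's proof: restrict attention to minimal interval extensions toward $p$, exploit that the left and right sides of the axis are independent, and solve each side by a dynamic program whose state tracks (axis position, number of bribes chosen, number of candidates pushed above $p$'s final score). The differences are only cosmetic---you enumerate $p$'s final score $S$ and convolve the left/right tables, and run the DP from the far end toward $p$, whereas the paper guesses the left/right bribe counts and exceedance budgets $(x_\ell, x_r, t_\ell, t_r)$ directly and runs its DP outward from $p$ using the shifted threshold $S(e)$ to account for not-yet-decided bribes.
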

\begin{proof}
  Our input consists of an election $E = (C,V)$, committee size $k$,
  preferred candidate $p \in C$, budget $B$, and the information about
  the costs of all the possible operations (i.e., for each voter and
  each candidate that he or she does not approve, we have the price
  for adding this candidate to the voter's ballot). Without loss of
  generality, we assume that
  $C = \{\ell_{m'}, \ldots, \ell_1, p, r_1, \ldots, r_{m''}\}$,
  $V = \{v_1, \ldots, v_n\}$, each voter approves at least one
  candidate,\footnote{Without this assumption we could still make our
    algorithm work.  We would guess the number of voters
    who do not approve any candidates to approve $p$ alone (we would choose these 
    voters to minimize the cost of adding these approvals). Then we would continue as in
    the proof, but knowing that none of the voters in the group can be bribed further.} and the election is CI with respect to the order:
  \[
    \rhd = \ell_{m'} \cdots  \ell_2\  \ell_1 \  p \  r_1\ 
     r_2  \cdots  r_{m''}.
  \]
  We start with a few observations. First, %
  if a voter
  already approves~$p$ then there is no point in adding any approvals
  to his or her ballot. Second, if some voter does not approve $p$,
  then we should either not add any approvals to his or her ballot, or
  add exactly those approvals that are necessary to ensure that $p$
  gets one. For example, if some voter has approval ballot
  $\{r_3, r_4, r_5\}$ then we may either choose to leave it intact or
  to extend it to $\{p, r_1, r_2, r_3, r_4, r_5\}$.
  We let $L = \{\ell_{m'}, \ldots, \ell_1\}$ and
  $R = \{r_1, \ldots, r_{m''}\}$, and we partition the voters into
  three groups, $V_\ell$, $V_p$, and $V_r$, as follows:
  \begin{enumerate}
    \item $V_p$ contains all the voters who approve $p$,
    \item $V_\ell$ contains the voters who approve members of $L$
          only,
    \item $V_r$ contains the voters who approve members of $R$ only.
  \end{enumerate}
  Our algorithm proceeds as follows (by guessing we mean iteratively
  trying all possibilities; Steps~\ref{alg1:dp1} and~\ref{alg1:dp2}
  will be described later): %
  \begin{enumerate}
    \item Guess the numbers $x_\ell$ and $x_r$ of voters from $V_\ell$
          and $V_r$ whose approval ballots will be extended to approve $p$.
    \item Guess the numbers $t_\ell$ and $t_r$ of candidates from
          $L$ and $R$ that will end up with higher approval scores
          than $p$ (we must have $t_\ell + t_r < k$ for $p$ to join a
          winning committee).
    \item \label{alg1:dp1} Compute the lowest cost of extending exactly $x_\ell$ votes
          from $V_\ell$ to approve $p$, such that at most $t_\ell$ candidates from
          $L$ end up with more than $\score_E(p)+x_\ell+x_r$ points (i.e.,
          with score higher than $p$); denote this cost as $B_\ell$.
    \item \label{alg1:dp2} Repeat the above step for the $x_r$ voters
          from $V_r$, with at most $t_r$ candidates obtaining more than
          $\score_E(p)+x_\ell+x_r$ points; denote the cost of this operation
          as $B_r$.
    \item If $B_\ell + B_r \leq B$ then accept (reject if no choice of
          $x_\ell$, $x_r$, $t_\ell$, and $t_r$ leads to acceptance).
  \end{enumerate}
  One can verify that this algorithm is correct (assuming we know how
  to perform Steps~\ref{alg1:dp1} and~\ref{alg1:dp2}).

  Next we describe how to perform Step~\ref{alg1:dp1} in polynomial
  time (Step~\ref{alg1:dp2} is handled analogously). We will need some
  additional notation. For each $i \in [m']$, let $V_{\ell}(i)$
  consist exactly of those voters from $V_\ell$ whose approval ballots
  include candidate $\ell_i$ but do not include $\ell_{i-1}$ (in other
  words, voters in $V_\ell(i)$ have approval ballots of the form
  $[\ell_j, \ell_i]$, where $j \geq i$). Further, for each
  $i \in [m']$ and each $e \in [|V_\ell(i)|]_0$ let $\cost(i,e)$ be
  the lowest cost of extending $e$ votes from $V_\ell(i)$ to approve
  $p$ (and, as a consequence, to also approve candidates
  $\ell_{i-1}, \ldots, \ell_1$).  If $V_\ell(i)$ contains fewer than
  $e$ voters then $\cost(i,e) = +\infty$. For each $e \in [x_\ell]_0$,
  we define $S(e) = \score_E(p) + e + x_r$. Finally, for each
  $i \in [m']$, $e \in [x_\ell]_0$, and $t \in [t_\ell]_0$ we define
  function $f(i,e,t)$ so that:
  \begin{enumerate}
    \item[] $f(i,e,t)$ = the lowest cost of extending exactly $e$ votes
          from $V_\ell(1) \cup \cdots \cup V_\ell(i)$ (to approve~$p$) such
          that at most $t$ candidates among $\ell_1, \ldots, \ell_i$ end up
          with more than $S(e)$ points (function~$f$ takes value $+\infty$
          if doing so is impossible).         
  \end{enumerate}
  Our goal in Step~\ref{alg1:dp1} of the main algorithm is to compute
  $f(m',x_\ell,t_\ell)$, which we do via dynamic programming.  To this
  end, we observe that the following recursive equation holds (let
  $\chi(i,e)$ be $1$ if $\score_E(\ell_i) > S(e)$ and let $\chi(i,e)$
  be $0$ otherwise; we explain the idea of the equation below):
  \begin{align*}
    f(i,e,t)\! =\! \min_{e' \in [e]_0} \! \big( \cost(i,e') + f(i-1, e-e', t - \chi(i,e))\big).
  \end{align*}
  The intuition behind this equation is as follows. We consider each
  possible number $e' \in [e]_0$ of votes from $V_\ell(i)$ that can
  be extended to approve $p$. The lowest cost of extending the votes
  of $e'$ voters from $V_\ell(i)$ is, by definition,
  $\cost(i,e')$. Next, we still need to extend $e-e'$ votes from
  $V_\ell(i-1), \ldots, V_\ell(1)$ and, while doing so, we need to
  ensure that at most $t$ candidates end up with at most $S(e)$
  points. Candidate $\ell_i$ cannot get any additional approvals from
  voters $V_\ell(i-1), \ldots, V_\ell(1)$, so he or she exceeds this
  value exactly if $\score_E(\ell_i) > S(e)$ or, equivalently, if
  $\chi(i,e) = 1$. This means that we have to ensure that at most
  $t - \chi(i,e)$ candidates among $\ell_{i-1}, \ldots, \ell_1$ end up
  with at most $S(e)$ points. However, since we extend $e'$ votes from
  $V_\ell(i)$, we know that candidates $\ell_{i-1}, \ldots, \ell_1$
  certainly obtain $e'$ additional points (as compared to the input
  election). Thus we need to ensure that at most $t - \chi(i,e)$ of
  them end up with score at most $S(e-e')$ after extending the votes
  from $V_\ell(1) \cup \ldots \cup V_\ell(i-1)$. This is ensured by
  the $f(i-1,e-e',t-\chi(i,e))$ component in the equation (which also
  provides the lowest cost of the respective operations).

  Using the above formula, the fact that $f(1,e,t)$ can be computed
  easily for all values of~$e$ and~$t$, and standard dynamic
  programming techniques, we can compute $f(m',x_\ell,t_\ell)$ in
  polynomial time. This suffices for completing Step~\ref{alg1:dp1} of
  the main algorithm and we handle Step~\ref{alg1:dp2}
  analogously. Since all the steps of can be performed in polynomial
  time, the proof is complete.
\end{proof}

Both above theorems also apply to the cases where we can only add
approvals for the preferred candidate. The algorithm from
Theorem~\ref{thm:add-approvals-vi} is designed to do just that, and
for the algorithm from Theorem~\ref{thm:add-approvals-ci} we can set
the price of adding other approvals to be $+\infty$.

\section{Deleting Approvals}
\label{sec:deleting}

The case of deleting approvals is more intriguing. Roughly speaking, in the unrestricted 
setting it suffices  to delete approvals from sufficiently many
candidates  that
have higher scores than~$p$, for whom doing so is least expensive~\citep{fal-sko-tal:c:bribery-measure}. 
The same general strategy works for the VI case because we still can delete approvals for different candidates independently.

\begin{theorem}%
\label{thm:delete-approvals-vi}
  \textsc{AV-\$DelApprovals-VI-Bribery} $\in~\p$.
\end{theorem}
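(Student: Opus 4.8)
The plan is to reuse the high-level strategy cited above for the unrestricted case, the only genuinely new ingredient being that deletions must preserve the VI property for the same fixed voter order $v_1 \rhd \cdots \rhd v_n$. The crucial structural observation is that this constraint \emph{decouples across candidates}: after the bribery, for every candidate $c$ the set of voters still approving $c$ must again form an interval of the order, and deleting some of $c$'s approvals never affects whether another candidate is approved by an interval. Hence the feasible deletions of $c$'s approvals are exactly those that shrink $c$'s voter-interval from its two ends; deleting an approval from the middle of $c$'s interval is forbidden, but removing a prefix and/or suffix of it is always allowed.

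First I would note, just as in the unrestricted setting, that we never gain from deleting $p$'s own approvals, nor from touching any candidate whose score is already at most $\score_E(p)$. So fix $s^\star = \score_E(p)$ and let $H = \{c : \score_E(c) > s^\star\}$. Using the standard fact that, under $\av$ with the nonunique-winner model, $p$ belongs to some winning committee if and only if the number of candidates with score strictly greater than $p$ is less than $k$, the task reduces to: delete approvals so that at most $k-1$ members of $H$ retain a score exceeding $s^\star$. If $|H| < k$ we accept at zero cost; otherwise we must drive at least $|H| - k + 1$ of the candidates in $H$ down to score $\le s^\star$.

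Next, for each $c \in H$ I would compute the minimum cost $\cost(c)$ of reducing $c$'s score to exactly $s^\star$, i.e.\ of deleting $d_c = \score_E(c) - s^\star$ of $c$'s approvals. Since only the two ends of $c$'s interval may be removed while keeping it an interval, every feasible choice deletes a prefix of length $t$ from the top end and a suffix of length $d_c - t$ from the bottom end. Iterating $t$ from $0$ to $d_c$ and evaluating the corresponding deletion prices via prefix sums yields $\cost(c)$ in time linear in the length of $c$'s interval; reducing $c$ below $s^\star$ is never cheaper because all prices are nonnegative, so reducing to exactly $s^\star$ is optimal for neutralizing $c$.

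Finally, because the candidates in $H$ can be handled independently, the cheapest successful bribery neutralizes the $|H| - k + 1$ candidates of $H$ with the smallest values of $\cost(\cdot)$: I would sort these costs, sum the smallest $|H| - k + 1$ of them, and accept if and only if this total is at most $B$. Every step runs in polynomial time. The main obstacle is the justification underlying the first paragraph, namely arguing carefully that maintaining the VI property restricts us precisely to end-of-interval deletions and that this restriction couples only within a single candidate; once this is established, the per-candidate cost computation together with the greedy selection mirrors the familiar unrestricted argument and gives correctness.
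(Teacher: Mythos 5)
Your proposal is correct and takes essentially the same route as the paper's proof: identify the candidates with score above $\score_E(p)$, compute for each the cheapest way to delete the required number of approvals as a prefix/suffix split of its voter interval (enumerating all splits), and then greedily neutralize the cheapest $|H|-(k-1)$ of them. The decoupling-across-candidates observation you flag as the key justification is exactly what the paper relies on as well, so there is no gap to fill.
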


\begin{proof}
  Let our input consist of an election $E=(C,V)$, preferred candidate
  $p \in C$, committee size~$k$, and budget~$B$. We assume that
  $V = \{v_1, \ldots, v_n\}$ and the election is VI with respect to
  ordering the voters by their indices.  Let $s = \score_E(p)$ be the
  score of $p$ prior to any bribery. We refer to the candidates with
  score greater than~$s$ as \emph{superior}.
  Since it is impossible to increase the score of $p$ by deleting
  approvals, we need to ensure that the number of superior candidates
  drops to at most $k-1$.

  For each superior candidate $c$, we compute the lowest cost for
  reducing his or her score to exactly~$s$. Specifically, for each
  such candidate $c$ we act as follows.
  Let $t = \score_E(c) - s$ be the number of $c$'s approvals that we need to delete and
  let $v_{a}, v_{a+1}, \ldots, v_{b}$ be the interval of voters that approve $c$.  
  For each $i \in [t]_0$ we compute the cost of deleting $c$'s approvals  among
  the first $i$ and the last $t-i$ voters in the interval (these are the only operations that achieve our goal and maintain the VI property of the election); we store the lowest of these costs as ``the cost of $c$.''
  
  Let $S$ be the number of superior candidates (prior to any bribery).
  We choose $S-(k-1)$ of them with the lowest costs. If the sum of these
  costs is at most $B$ then we accept and, otherwise, we reject.
\end{proof}

For the CI case, our problem turns out to be $\np$-complete.
Intuitively, the reason for this is that in the CI domain deleting an
approval for a given candidate requires either deleting all the
approvals to the left or all the approvals to the right on the
societal axis.
Indeed, our main trick is to introduce approvals that must be deleted
(at zero cost), but doing so requires choosing whether to delete their
left or their right neighbors (at nonzero cost).  This result is our
first example of a complexity reversal.

\begin{theorem}\label{thm:delete-approvals-ci}
  \textsc{AV-\$DelApprovals-CI-Bribery} is $\np$-complete.
\end{theorem}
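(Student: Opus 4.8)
\textbf{Membership in \np{}.} This direction is routine. A bribery is a set of approval deletions, and since there are at most $\sum_{v \in V} |v|$ approvals in total, a solution has polynomial size. Given a guessed set of deletions we check in polynomial time that (a)~its total cost is at most $B$, (b)~every resulting ballot is still an interval of the given societal axis $\rhd$, so that the CI property is preserved, and (c)~after the deletions at most $k-1$ candidates have approval score strictly larger than $\score_E(p)$, which is equivalent to $p$ belonging to some winning committee (recall that deletions can never raise $p$'s score). Hence the problem lies in \np{}.

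\textbf{Hardness: the gadget.} I would prove \np-hardness by reduction from \textsc{RX3C}, and the whole construction rests on the CI trimming constraint already highlighted in the text: if a ballot is an interval and we must remove one candidate $d$ from it while keeping it an interval, then we are forced to delete either the entire portion of the ballot weakly to the left of $d$ or the entire portion weakly to the right. The plan is to place, inside each relevant ballot, a \emph{dummy} candidate $d$ whose removal is \emph{forced} and \emph{free}: $d$ is given a high initial score, so it is superior and, because the committee is too small to retain it, must be knocked down, while (exploiting the pricing freedom of the \textsc{\$} variant) deleting any approval of $d$ costs $0$. Because $d$ sits in the interior of its ballot, each free deletion of $d$ drags along a costly deletion of one whole side of that ballot; choosing the side is the single binary decision the gadget exposes. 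Crucially, trimming a side can remove several ``real'' candidates at once, which is exactly what lets one decision cover the three elements of a set.

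\textbf{Hardness: the construction.} Given an \textsc{RX3C} instance with elements $X$ and family $\calS$, the societal axis would interleave element-candidates (one or more copies per element), set-markers, and dummies. For each set $S_j = \{x_a, x_b, x_c\}$ I would build a ballot in which the three element-candidates of $S_j$ lie immediately to one side of a dummy $d_j$ and a private marker to the other side; trimming $d_j$ toward the elements (``selecting $S_j$'') decrements all three element-candidates simultaneously, while trimming the other way decrements only the marker. The element-candidate scores, the committee size $k$, and the per-deletion prices would be tuned so that (i)~all dummies must be knocked down, forcing exactly one side-choice per set, (ii)~the budget $B$ admits the costly element-side trims of at most $n$ sets, and (iii)~$p$ reaches a winning committee, i.e.\ the number of remaining superior candidates falls to $k-1$, precisely when every element-candidate has been driven down to $\score_E(p)$. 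Since each element lies in exactly three sets and an exact cover uses $n$ sets, the counting makes ``every element covered exactly once'' correspond to ``every element-candidate knocked down at minimum total price,'' yielding a two-way correspondence between exact covers and successful briberies of cost at most $B$.

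\textbf{Main obstacle.} The delicate point is the single-axis layout. An element occurs in three different sets, so its coverage must be expressible from three different ballots whose dummies sit at three different axis positions; realizing this on one linear order, while guaranteeing that every ballot is an interval before the bribery and stays an interval after any admissible combination of side-choices, is where the real work lies. I expect to need copy candidates for each element, tied together by auxiliary equality gadgets so that the three ``occurrences'' of an element behave consistently, together with a careful argument that no unintended trimming can simultaneously preserve CI, stay within budget, and fail to correspond to an exact cover. Establishing this faithfulness of the encoding, rather than the scoring arithmetic, is the crux of the proof.
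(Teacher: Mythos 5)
Your NP-membership argument, your choice of \textsc{RX3C} as the source problem, and your core trick (a forced, free deletion of a dummy sitting in a ballot's interior, which drags along a costly deletion of one whole side) all match the paper. But the hardness construction has a genuine gap, and it sits exactly where you say "the real work lies." Your gadget requires, for each set $S_j=\{x_a,x_b,x_c\}$, a single ballot in which the three element-candidates of $S_j$ appear \emph{contiguously}, immediately next to a private dummy $d_j$. On a single linear axis this is impossible in general: demanding that every set's three elements occupy consecutive positions amounts to demanding that the \textsc{RX3C} set system has the consecutive-ones property, which it need not have (every element lies in three different sets, so the triples overlap heavily). Your proposed repair---copy candidates bound by "equality gadgets"---is never constructed, and it is far from clear it can be: with deletions as the only operation, nothing transfers score between copies, so nothing forces the three occurrences of an element to be knocked down consistently; an exact cover hits each element once, not three times, so naive copying breaks the correspondence. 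As written, the proposal is a plan whose crux is missing, not a proof.

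It is instructive to see how the paper's construction sidesteps contiguity altogether. Instead of one ballot per set it uses \emph{three} ballots per set, one per element: on the axis $s_1 \cdots s_{3n}\ \diamond \cdots \diamond\ x_1 \cdots x_{3n}\ p$ (with $2n$ dummies in the middle), the set $S_j=\{x_a,x_b,x_c\}$ yields three solution voters with the long interval ballots $[s_j,x_a]$, $[s_j,x_b]$, $[s_j,x_c]$. Every such ballot spans all the dummies, so all its interior approvals must be deleted (at zero cost) to keep the dummies' scores down, and the CI requirement then forces paying for exactly one of the two endpoints---the set side or the element side---in each of the $9n$ solution ballots, since the budget is $B=9n$. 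Consistency among the three occurrences of an element is enforced not by local gadgets but by a global counting argument: if $x$ set candidates and $y$ element candidates end up at $p$'s score, then $x+y\geq 5n$ is needed for $p$ to win, while the prices give $3x+y\leq 9n$; together these force $x=2n$ and $y=3n$, which is achievable exactly when the $n$ set candidates that keep their endpoint approvals form an exact cover. Replacing your per-set contiguous-triple gadget by this "long interval, pay for one endpoint" gadget plus the counting argument is the missing step. (One small separate fix: in your membership check (c), compare the other candidates' scores to $p$'s score \emph{after} the deletions rather than to $\score_E(p)$, since a certificate may touch $p$'s ballots.)
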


\begin{proof}
  We give a reduction from \textsc{RX3C}. Let $I=(X,\mathcal{S})$ be
  the input instance, where $X = \{x_1,\ldots, x_{3n}\}$ is the
  universe and $\mathcal{S} = \{S_1,\ldots,S_{3n}\}$ is a family of
  size-$3$ subsets of $X$. By definition, each element of $X$ belongs
  to exactly three sets from $\mathcal{S}$.
  We form an %
  instance of 
  \textsc{AV-\$DelApprovals-CI-Bribery}
  as follows.

  We have the preferred candidate $p$, for each universe element~$x_i \in X$
  we have corresponding universe candidate $x_i$, for each
  set $S_j \in \calS$ we have set candidate $s_j$, and we have set $D$
  of $2n$ dummy candidates (where each individual one is denoted
  by~$\diamond$).  Let $C$ be the set of just-described $8n+1$
  candidates and let $S = \{s_1, \ldots, s_{3n}\}$ contain the set
  candidates. We fix the societal axis to be:
  \[
      \rhd = 
      \underbrace{\overbrace{s_1 \cdots s_{3n}}^{3n}
      \overbrace{\diamond \cdots \diamond}^{2n}
      \overbrace{x_1 \cdots x_{3n}}^{3n}
      p}_{8n+1}
  \]
  Next, we form the voter collection $V$:
  \begin{enumerate}
  \item For each candidate in $S \cup D \cup \{p\}$, we have two
    voters that approve exactly this candidate. We refer to them as
    the \emph{fixed voters} and we set the price for deleting their
    approvals to be~$+\infty$. We refer to their approvals as
    \emph{fixed}.

  \item For each set $S_j = \{x_a,x_b,x_c\}$, we form three
    \emph{solution voters}, $v(s_j,x_a)$, $v(s_j,x_b)$, and
    $v(s_j,x_c)$, with approval sets $[s_j,x_a]$, $[s_j, x_b]$, and
    $[s_j,x_c]$, respectively. For a solution voter $v(s_i,x_d)$, we
    refer to the approvals that $s_i$ and $x_d$ receive as 
    \emph{exterior}, and to all the other ones as 
    \emph{interior}. The cost for deleting each exterior approval is
    one, whereas the cost for deleting the interior approvals is
    zero. Altogether, there are $9n$ solution voters.
  \end{enumerate}
  To finish the construction, we set the committee size $k=n+1$ and
  the budget $B=9n$.  Below, we list the approval scores prior to any
  bribery (later we will see that in successful briberies one always
  deletes all the interior approvals):
  \begin{enumerate}
  \item $p$ has $2$ fixed approvals,
  \item each universe candidate has $3$~exterior approvals (plus some
    number of interior ones),
  \item each set candidate has $3$ exterior approvals and $2$ fixed
    ones (plus some number of interior ones), and
  \item each dummy candidate has $2$ fixed approvals (and $9n$
    interior ones).
  \end{enumerate}

  We claim that there is a bribery of cost at most $B$ that ensures
  that $p$ belongs to some winning committee if and only if $I$ is a
  yes-instance of \textsc{RX3C}. For the first direction, let us
  assume that $I$ is a yes-instance and let $\calT$ be a size-$n$
  subset of $\calS$ such that $\bigcup_{S_i \in \calT} S_i = X$ (i.e.,
  $\calT$ is the desired exact cover). We perform the following
  bribery: First, for each solution voter we delete all his or her
  interior approvals. Next, to maintain the CI property (and to lower
  the scores of some candidates), for each solution voter we delete
  one exterior approval.  Specifically, for each set
  $S_j = \{x_a,x_b,x_c\}$, if $S_j$ belongs to the cover (i.e., if
  $S_i \in \calT$) then we delete the approvals for $x_a$, $x_b$, and
  $x_c$ in %
  $v(s_j,x_a)$, $v(s_j,x_b)$, and $v(s_j,x_c)$, respectively;
  otherwise, i.e., if $S_j \notin \calT$, we delete the approvals for
  $s_j$ in these votes.  As a consequence, all the universe candidates
  end up with two exterior approvals each, the $n$ set candidates
  corresponding to the cover end up with three approvals each (two
  fixed ones and one exterior), the $2n$ remaining set candidates and
  all the dummy candidates end up with two fixed approvals
  each. Since~$p$ has two approvals, the committee size is $n+1$,
  and only $n$ candidates have score higher than $p$, $p$ belongs to
  some winning committee (and the cost of the bribery is~$B$).

  For the other direction, let us assume that there is a bribery with
  cost at most $B$ that ensures that $p$ belongs to some winning
  committee. It must be the case that this bribery deletes exactly one
  exterior approval from each solution voter. Otherwise, since there
  are $9n$ solution voters and the budget is also $9n$, some solution
  voter would keep both his or her exterior approvals, as well as all
  the interior ones.  This means that after the bribery there would be
  at least $2n$ dummy candidates with at least three points each. Then,
  $p$ would not belong to any winning committee. Thus, each solution
  voter deletes exactly one exterior approval, and we may assume that
  he or she also deletes all the interior ones (this comes at zero
  cost and does not decrease the score of $p$).

  By the above discussion, we know that all the dummy candidates end
  up with two fixed approvals, i.e., with the same score as~$p$. Thus,
  for $p$ to belong to some winning committee, at least $5n$
  candidates among the set and universe ones also must end up with at
  most two approvals (at most $n$ candidates can have score higher
  than $p$). Let $x$ be the number of set candidates whose approval
  score drops to at most two, and let $y$ be the number of such
  universe candidates. We have that: %
  \begin{align}\label{eq:1}
    0 \leq x \leq 3n&,& 0 \leq y \leq 3n&,& \text{and}&& x+y \geq 5n.
  \end{align}
  Prior to the bribery, each set candidate has five non-interior approvals
  (including three exterior approvals) so bringing his or her score to
  at most two costs three units of budget. Doing the same for a
  universe candidate costs only one unit of budget, as universe
  candidates originally have only three non-interior approvals.
  Since our total budget is $9n$, we have:
  \begin{equation}\label{eq:2}
    3x+y \leq 9n.
  \end{equation}
  Together, inequalities~\eqref{eq:1} and~\eqref{eq:2} imply that
  $x = 2n$ and $y = 3n$. That is, for each universe candidate $x_i$
  there is a solution voter $v(s_j,x_d)$ who is bribed to delete the
  approval for $x_d$ (and, as a consequence of our previous
  discussion, who is not bribed to delete the approval for $s_j$). We
  call such solution voters \emph{active} and we
  define a family:
  \[
    \calT = \{ S_j \mid s_j \text{ is approved by some active solution
      voter} \}.
  \]
  We claim that $\calT$ is an exact cover for the \textsc{RX3C}
  instance $I$. Indeed, by definition of active solution voters we
  have that $\bigcup_{S_i \in \calT} S_i = X$. Further, it must be the
  case that $|\calT| = n$. This follows from the observation that if
  some solution voter is active then his or her corresponding set
  candidate $s_j$ has at least three approvals after the bribery (each
  set candidate receives exterior approvals from exactly three
  solution voters and these approvals must be deleted if the candidate
  is to end up with score two; this is possible only if all the three
  solution voters are not active).  Since exactly $2n$ set candidates
  must have their scores reduced to two, it must be that
  $3n - |\calT| = 2n$, so $|\calT| = n$. This completes the proof.
\end{proof}

The above proof strongly relies on using $0$/$1$/$+\infty$ prices. The case of unit prices remains open and we believe that resolving it
might be quite  challenging.

\section{Swapping Approvals}
\label{sec:swapping}
In some sense, bribery by swapping
approvals is our most interesting scenario because there are cases
where a given problem has the same complexity both in the unrestricted
setting and for some structured domain (and this happens both for
tractability and $\np$-completeness), as well as cases where the
unrestricted variant is tractable but the structured one is not or the
other way round.

\subsection{Approval Swaps to the Preferred Candidate}
Let us first consider a variant of \textsc{AV-SwapApprovals-Bribery}
where each unit operation moves an approval from some candidate to the
preferred one. We call operations of this form \textsc{SwapApprovals
  to p}. In the unrestricted setting, this problem is in $\p$ for unit
prices but is $\np$-complete if the prices are arbitrary. For the CI
and VI domains, the problem can be solved in polynomial time for both
types of prices. While for the CI domain this is not so
surprising---indeed, in this case possible unit operations are very
limited---the VI case requires quite some care.

\begin{theorem}%
\label{thm:swap-approvals-to-p-ci}
  \textsc{AV-\$SwapApprovals to p-CI-Bribery} $\in \p$.
\end{theorem}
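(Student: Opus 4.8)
The plan is to first pin down exactly which unit operations remain available once we insist on preserving the CI property, and then to show that the surviving operations carry disjoint, per-candidate ``resources,'' which makes the instance amenable to a simple dynamic program. Write the societal axis as $\rhd = \ell_{m'}\cdots\ell_1\,p\,r_1\cdots r_{m''}$, put $L=\{\ell_{m'},\dots,\ell_1\}$, $R=\{r_1,\dots,r_{m''}\}$, and let $s_0=\score_E(p)$. The key structural lemma I would prove is the following. A \textsc{SwapApprovals to p} move inserts $p$ into a voter's ballot, so it can only be applied to a voter who does not yet approve $p$; such a ballot is an interval lying entirely in $L$ or entirely in $R$. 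If it is $[\ell_a,\ell_b]$ with $b\ge 2$ (separated from $p$ by at least $\ell_1$), then deleting one candidate and inserting $p$ always leaves a gap on the axis, so no such move keeps the ballot an interval; symmetrically on the right. Hence only $p$-adjacent ballots can be changed: for $[\ell_a,\ell_1]$ the unique interval-preserving move deletes the far endpoint $\ell_a$ and yields $[\ell_{a-1},p]$, and $[r_1,r_b]$ must delete $r_b$ and yield $[p,r_{b-1}]$. Since the modified ballot then contains $p$, no voter can be touched twice. Thus each voter offers \emph{at most one} usable operation, which raises $\score(p)$ by one and lowers by one a single fixed candidate (its endpoint farthest from $p$); voters that already approve $p$, and voters whose ballot is not $p$-adjacent, are inert.

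This reduces the problem to a clean combinatorial one. We obtain a pool $\mathcal{O}$ of operations, one per $p$-adjacent voter; each $o\in\mathcal{O}$ has a cost $c_o$, increases $\score(p)$ by one, and decreases one specific candidate $\kappa(o)$ by one; the operations hitting a given candidate form a disjoint sub-pool, each usable once. Choosing $S\subseteq\mathcal{O}$ gives $p$ final score $s_0+|S|$, and $p$ joins a winning committee iff at most $k-1$ candidates have final score exceeding $s_0+|S|$. Note every operation only helps the winning condition (it simultaneously raises the threshold and lowers a competitor), so feasibility is upward closed and only cost matters; left- and right-side candidates can be treated uniformly since each operation hits exactly one candidate.

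The algorithm then iterates over the intended final score $\tau=s_0+g$ of $p$, for $g=0,1,\dots,|\mathcal{O}|$. Fixing $\tau$ fixes, for each candidate $c$, the number $need_c=\max(0,\score_E(c)-\tau)$ of hits needed to keep $c$ from exceeding $\tau$; let $P$ be the number of candidates with $need_c>0$. Making $p$ a winner with final score $\tau$ is exactly: perform $g$ operations in total and fully cover (give $need_c$ hits to) at least $P-(k-1)$ of these candidates. Because the sub-pools are disjoint, the cheapest way to take $j$ operations against $c$ is to take its $j$ cheapest ones, a prefix sum. I would therefore run a dynamic program over candidates whose state $(\text{candidate index},\ \text{operations used},\ \text{candidates covered})$ stores minimum cost; the transition for each candidate tries every $j\in[0,\mathrm{avail}_c]$, adds the prefix-sum cost, and marks the candidate covered when $j\ge need_c$. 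We accept iff for some $g$ a state with exactly $g$ operations used and at least $P-(k-1)$ candidates covered has cost at most $B$. Every step is polynomial.

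The main obstacle is the structural lemma itself: one must argue that the CI constraint collapses the otherwise enormous freedom of \textsc{SwapApprovals to p}---precisely the freedom that makes the unrestricted \textsc{\$}-priced variant $\np$-complete---down to a single fixed-target operation per voter, and in particular that voters already approving $p$ and non-$p$-adjacent voters contribute nothing. Once this disjoint-sub-pool structure is established, the remaining optimization is routine; the only mild care needed is the bookkeeping that ties $p$'s threshold $\tau$ to the exact number $g$ of operations performed, which the outer loop over $g$ resolves.
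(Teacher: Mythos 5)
Your proposal is correct and follows essentially the same route as the paper: the identical key structural lemma (the CI constraint forces each voter to admit at most one legal operation---moving the approval of the far endpoint of a $p$-adjacent interval to $p$---while all other voters are inert), followed by guessing $p$'s final score and exploiting the fact that the surviving operations form disjoint per-candidate pools. The only difference is in the final optimization step, where the paper greedily suppresses the $|S|-(k-1)$ cheapest candidates scoring above the guessed value $y$ and then adds the cheapest remaining swaps to bring $p$ up to $y$, whereas you run an explicit dynamic program over (candidate index, operations used, candidates covered); both are valid polynomial-time solutions of the same reduced problem.
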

\begin{proof}
  Consider an input with CI election $E = (C,V)$, committee size $k$, preferred candidate $p$, and budget $B$. 
  W.l.o.g., we assume that $C = \{l_{m'}, \ldots, l_1, p, r_1, \ldots, r_{m''}\}$, $V = \{v_1, \ldots, v_n\}$, and the societal axis is: 
  \[
    \rhd = \ell_{m'}\ \cdots\  \ell_2\  \ell_1 \  p \  r_1\ r_2\  \cdots\  r_{m''}.
  \]
  Since unit operations must move approvals to $p$, for each voter $v_i$ exactly one of the following holds:
  \begin{enumerate}
  \item There is $t \in [m']$ such that $v_i$ has approval set  $[\ell_t,\ell_1]$ and the only possible operation is to move an approval from~$\ell_t$ to $p$ at some given cost.
  \item There is $t \in [m'']$ such that $v_i$ has approval set  $[r_1,r_t]$ and the only possible operation is to move an approval from~$r_t$ to $p$ at some given cost. 
  \item It is illegal to move any approvals for this voter.
  \end{enumerate}
  For each candidate $c \in C \setminus \{p\}$
  and each integer $x$, we let $f(c,x)$ be the lowest cost of moving
  $x$ approvals from $c$ to $p$ (we assume that $f(c,x) = +\infty$ if
  doing so is impossible). By the above discussion, we
  can compute the values of $f$  in polynomial time.
  
  Our algorithm proceeds as follows. First, we guess score
  $y \in [n]_0$ that we expect $p$ to end up with. Second, we let~$S$
  be the set of candidates that in the input election have score
  higher than $y$. For each candidate $c \in S$ we define his or her
  cost to be $f(c, \score_E(c)-y)$, i.e., the lowest cost of moving
  approvals from $c$ to $p$ so that $c$ ends up with score~$y$. Then
  we let~$S'$ be a set of~$|S|-(k-1)$ members of~$S$ with the lowest
  costs (if~$|S| \leq k-1$ then~$S'$ is an empty set).  For
  each~$c \in S'$, we perform the approval moves implied by
  $f(c,\score_E(c)-y)$. Finally, we ensure that~$p$ has~$y$ approvals
  by performing sufficiently many of the cheapest still-not-performed
  unit operations (we reject for this value of $y$ if not enough
  operations remained). If the total cost of all performed unit
  operations is at most $B$, we accept (indeed, we have just found a
  bribery that ensures that there are at most $k-1$ candidates with
  score higher than $p$ and whose cost is not too high). Otherwise, we
  reject for this value of~$y$.  If there is no $y$ for which we
  accept, we reject.
\end{proof}

Our algorithm for the VI case is based on
dynamic programming (expressed as searching for a shortest path
in a certain graph) and relies on the fact that due to the VI property
we avoid performing the same unit operations twice.

\begin{theorem}%
  \label{thm:swap-approvals-to-p-vi}
  \textsc{AV-\$SwapApprovals to p-VI-Bribery}  $\in \p$.
\end{theorem}

\begin{proof}
  Consider an instance of our problem with an election $E=(C,V)$,
  committee size~$k$, preferred candidate $p$, and budget $B$.
  Without loss of generality, we assume that $V = \{v_1, \ldots, v_n\}$
  and that the election is VI with respect to the order
  $v_1 \rhd v_2 \rhd \cdots \rhd v_n$. We also assume that $p$ has at
  least one approval (if it were not the case, we could try all
  possible single-approval swaps to $p$ to try all ways in which this
  assumption can be satisfied; if $p$ is not already a member of some winning committee then we know that $p$ needs to receive at least one approval, so this procedure is correct).
  On a high level, our algorithm proceeds as follows: We try all
  pairs of integers $\alpha$ and $\beta$ such that
  $1 \leq \alpha \leq \beta \leq n$ and, for each of them, we check if
  there is a bribery with cost at most $B$ that ensures that the
  preferred candidate is (a)~approved exactly by voters
  $v_\alpha, \ldots, v_\beta$, and (b)~belongs to some winning
  committee. If such a bribery exists then we accept and, otherwise, we
  reject.  Below we describe the algorithm that finds the cheapest
  successful bribery for a given pair $\alpha, \beta$ (if one exists).

  Let $\alpha$ and $\beta$ be fixed. Further, let $x, y$ be two
  integers such that in the original election $p$ is approved exactly
  by voters $v_x, v_{x+1}, \ldots, v_y$. Naturally, we require that
  $\alpha \leq x \leq y \leq \beta$;
  if this condition is not met then we
  drop this $\alpha$ and $\beta$.
  We let $s = \beta - \alpha + 1$ be the score that $p$ is to have
  after the bribery. We say that a candidate $c \in C \setminus \{p\}$
  is \emph{dangerous} if his or her score in the original election is
  above $s$. Otherwise, we say that this candidate is \emph{safe}. Let
  $D$~be the number of dangerous candidates. For $p$ to become a
  member of some winning committee, we need to ensure that after the
  bribery at most $k-1$ dangerous candidates still have more than $s$
  points (each safe candidate certainly has at most $s$ points).

  To do so, we analyze a certain digraph.
  For each pair of integers $a$ and $b$ such that
  $\alpha \leq a \leq b \leq \beta$ and each integer $d \in [|C|-1]_0$
  we form a node $(a,b,d)$, corresponding to the fact that there is a
  bribery after which $p$ is approved exactly by voters
  $v_a, v_{a+1}, \ldots, v_b$ and exactly $d$ dangerous candidates
  have scores above $s$. Given two nodes $u' = (a',b', d')$ and
  $u'' = (a'',b'',d'')$, such that $a' \geq a''$, $b' \leq b''$, and
  $d'' \in \{d',d'-1\}$, there is a directed edge from $u'$ to $u''$
  with weight $\cost(u',u'')$ exactly if there is a candidate $c$ such
  that after bribing voters $v_{a''}, v_{a''+1}, \ldots, v_{a'-1}$ and
  $v_{b'+1}, \ldots, v_{b''-1}, v_{b''}$ to move an approval from $c$
  to $p$ it holds that:
  \begin{enumerate}
  \item voters approving $c$ still form an interval,
  \item if $c$ is a dangerous candidate and his or her score drops to
    at most $s$, then $d'' = d'-1$, and, otherwise, $d'' = d'$, and
  \item the cost of this bribery is exactly $\cost(u',u'')$.
  \end{enumerate}
  One can verify that for each node $u = (a,b,d)$ the weight of the
  shortest path from $(x,y,D)$ to $u$ is exactly the price of the
  lowest-cost bribery that ensures that $p$ is approved by voters
  $v_a, \ldots, v_b$ and exactly $d$ dangerous candidates have
  scores above $s$ (the VI property ensures that no approval is ever
  moved twice). Thus it suffices to find a node $(\alpha, \beta, K)$
  such that $K < k$, for which the shortest path from $(x,y,D)$ is at
  most $B$. Doing so is possible in polynomial time using, e.g., the
  classic algorithm of Dijkstra.
\end{proof}

\subsection{Arbitrary Swaps}

Next, we consider the full variant of bribery by swapping approvals.
For the unrestricted domain, the problem is $\np$-complete for general
prices, but admits a polynomial-time algorithm for unit
ones~\citep{fal-sko-tal:c:bribery-measure}. For the CI domain,
$\np$-completeness holds even for the latter. %

\begin{remark}
  The model of unit prices, applied directly to the case of
  \textsc{SwapApprovals-CI-Bribery}, is somewhat unintuitive. For
  example, consider societal axis
  $c_1 \rhd c_2 \rhd \cdots \rhd c_{10}$ and an approval set
  $[c_3,c_5]$. The costs of swap operations that transform it into,
  respectively, $[c_4,c_6]$, $[c_5,c_7]$, and $[c_6,c_8]$ are $1$,
  $2$, and $3$, as one would naturally expect. Yet, the cost of
  transforming it into, e.g., $[c_8,c_{10}]$ would also be~$3$ (move
  an approval from $c_3$ to $c_8$, from $c_4$ to $c_9$, and from $c_5$
  to $c_{10}$), which is not intuitive. Instead, it would be natural
  to define this cost to be $5$ (move the interval by $5$ positions to
  the right). Our proof of Theorem~\ref{thm:swap-approvals-ci} works
  without change for both these interpretations of unit prices.
\end{remark}

\begin{theorem}%
\label{thm:swap-approvals-ci}
  \textsc{AV-SwapApprovals-CI-Bribery} is $\np$-complete.
\end{theorem}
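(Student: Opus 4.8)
The plan is to establish membership in $\np$ and then reduce from \textsc{RX3C}. Membership is immediate: a bribery using at most $B$ unit swaps is a certificate whose size is polynomial in the input, and one checks in polynomial time both that every modified ballot is still an interval of the given societal axis and that $p$ joins some winning committee.

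The technical heart of the reduction is the way a single swap behaves once we insist that every ballot stay an interval. Consider a ballot that, on the axis, forms the interval $[c_i,c_j]$. A unit swap removes one approved candidate and adds one unapproved candidate, so to keep the result an interval the swap must either delete the right endpoint $c_j$ and add $c_{i-1}$, or delete the left endpoint $c_i$ and add $c_{j+1}$; every other single swap either breaks the interval or is vacuous. Hence each voter can, at unit cost, shift its whole interval one step left or one step right, and this is a genuine \emph{binary choice}: shifting right lowers the score of the left endpoint $c_i$ while raising the score of $c_{j+1}$, and shifting left does the mirror image. I would build the entire construction around this ``lower one endpoint, raise its outward neighbor'' behavior, which has no analogue in the deletion setting of \cref{thm:delete-approvals-ci}.

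Concretely, I would adapt the skeleton of \cref{thm:delete-approvals-ci}: arrange set candidates, element candidates, and the preferred candidate $p$ along the axis, and for each incidence $x_i \in S_j$ introduce one ``solution'' voter whose interval runs from $s_j$ on the left to $x_i$ on the right. A single rightward shift of this voter lowers $s_j$ (encoding ``$S_j$ is used'') and a single leftward shift lowers $x_i$ (encoding ``$x_i$ is covered here''). I would set the budget equal to the number of solution voters so that each must perform exactly one unit shift, fix the committee size to $k=n+1$, and tune the base scores so that $p$ joins a winning committee exactly when at most $n$ set candidates remain above it. A counting argument on how many set versus element candidates can be pushed below $p$ within the budget---analogous to inequalities~\eqref{eq:1} and~\eqref{eq:2}---would then force the used sets to form an exact cover of size $n$, giving both directions of the equivalence.

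The main obstacle, and the place where this proof genuinely departs from the deletion case, is controlling the \emph{side effects} of the shifts: a CI-preserving swap cannot simply erase an approval, it must deposit it on a neighboring candidate, so every intended score decrease is shadowed by a score increase somewhere else. The bulk of the work is to design the axis---most likely by inserting buffer/dummy candidates just outside the solution intervals---so that all these forced increases land on candidates that can never reach $p$'s target score (and so that boundary voters whose endpoint is extremal do not accidentally feed $p$), and to prove that no alternative bribery outperforms an honest cover. In particular I would have to rule out ``cheating'' strategies that shift an interval by several positions, that shuttle approvals among the competitors of $p$, or that exploit the counterintuitive long-range unit swaps highlighted in the preceding remark; showing that none of these can make $p$ win below the honest budget, while the exact-cover counting goes through, is where I expect the real difficulty to lie. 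The reverse direction, compiling a size-$n$ exact cover into a budget-$B$ bribery, should be routine.
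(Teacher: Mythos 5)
Your key structural observation---that a CI-preserving unit swap on a non-singleton interval ballot can only shift that interval one step left or one step right, giving each voter a binary ``lower the left endpoint or lower the right endpoint'' choice---is exactly the engine of the paper's proof. However, the paper reduces from \textsc{Cubic Independent Set}, not \textsc{RX3C}: each edge $\{c_a,c_b\}$ becomes a voter with ballot $[c_a,c_b]$, the binary choice is ``lower $c_a$ or lower $c_b$,'' no single edge voter can serve both endpoints within budget, and hence the $h$ vertex candidates that drop by $3$ under budget $B=3h$ must form an independent set. Note also that your characterization of single swaps has an exception: a \emph{singleton} ballot can be moved to \emph{any} position at unit cost while remaining an interval. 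This is not a pedantic point---it constrains how you may pad scores, as explained below.

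The genuine gap is your claim that the skeleton of Theorem~\ref{thm:delete-approvals-ci} ``adapts.'' That proof's forcing mechanism has no analogue under unit-price swaps. There, the dummy candidates carry $9n$ interior approvals that are deleted \emph{for free}, and the requirement that all dummies drop to score $2$ is what forces every solution voter to break its interval and pay for one exterior deletion. Under unit prices, every score decrease costs at least one unit (a swap lowers at most one candidate by one), so the middle candidates lying under the long ballots $[s_j,x_i]$---each needing a decrease on the order of $9n$---can never be brought down to $p$'s level. They must instead be re-conceived as permanently elected fillers, uniformly boosted (by extra voters) to more than $B$ above every competitor, with the committee size enlarged to absorb them; this is a redesign of the construction, not an adaptation. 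Three further consequences follow. First, on the deletion axis your intended bribery is infeasible: a rightward shift of $[s_j,x_i]$ deposits an approval on $\ssucc(x_i)$, i.e., on the next universe candidate or on $p$ itself, so buffer blocks of width greater than $B$ must be interleaved around \emph{every} set and universe candidate, and $p$ must be isolated behind such a block (otherwise the forced increases wreck the tight counting, or cheaply raise $p$). Second, the immovable interior approvals give the universe and set candidates wildly different, position-dependent base scores, so you must equalize them with padding voters---and this padding cannot use singletons or short ballots, since those could be swapped far away at unit cost, letting a briber lower any $2n$ set and $3n$ universe candidates for $9n$ regardless of whether a cover exists; padding ballots must be wide intervals anchored more than $B$ deep inside buffer blocks. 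Third, the tight-budget counting must additionally rule out multi-step shifts and wasted increases, which requires the ballot-length-exceeds-budget arguments. Once all of this is in place you have essentially rebuilt the paper's construction (its $t=B+1$, filler scores $L+4t$, padding ballots $[\pprec_t(c_i),\ssucc_t(c_i)]$, and the dummy block isolating $p$), merely with \textsc{RX3C} bookkeeping in place of the cleaner independent-set argument. The plan is salvageable, but everything you defer as obstacle control is the actual proof.
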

\begin{proof}
  We give a reduction from \textsc{Cubic Independent Set}. Let $G$ be
  our input graph, where $V(G) = \{c_1, \ldots, c_n\}$ and
  $E(G) = \{e_1, \ldots, e_L\}$, and let $h$ be the size of the
  desired independent set.  We construct the corresponding
  \textsc{AV-SwapApprovals-CI-Bribery} instance as follows.
  
  Let $B = 3h$ be our budget and let $t = B+1$ be a certain parameter
  (which we interpret as ``more than the budget''). We form the candidate
  set $C = V(G) \cup \{p\} \cup F \cup D$, where $p$ is the preferred
  candidate, $F$ is a set of $t(n+1)$ filler candidates, and $D$ is a set
  of $t$ dummy candidates. Altogether, there are $t(n+2)+n+1$
  candidates. We denote individual filler candidates by $\diamond$ and
  individual dummy candidates by $\bullet$; we fix the societal axis
  to be:
  \[
    \rhd =
    \underbrace{
      \overbrace{\diamond \cdots \diamond}^t c_1 \overbrace{\diamond \cdots \diamond}^t c_2 \diamond \cdots \diamond c_{n-1} \overbrace{\diamond \cdots \diamond}^t c_n \overbrace{\diamond \cdots \diamond}^t \overbrace{\bullet \cdots \bullet}^t p
    }_{t(n+2) + n + 1}
  \]
  For each positive integer $i$ and each candidate $c$, we write
  $\pprec_i(c)$ to mean the $i$-th candidate preceding $c$ in $\rhd$.
  Similarly, we write $\ssucc_i(c)$ to denote the $i$-th candidate
  after~$c$. We introduce the following voters:
  \begin{enumerate}
  \item For each edge $e_i = \{c_a,c_b\}$ we add an edge voter
    $v_{a,b}$ with approval set $[c_a,c_b]$. For each vertex
    $c_i \in V(G)$, we write $V(c_i)$ to denote the set of the three
    edge voters corresponding to the edges incident to $c_i$.

  \item Recall that $L = |E(G)|$. For each vertex candidate
    $c_i \in V(G)$, we add sufficiently many voters with approval set
    $[\pprec_{t}(c_i), \ssucc_{t}(c_i)]$, so that, together with the
    score from the edge voters, $c_i$~ends up with $L$ approvals.
   
  \item We add $L-3$ voters that approve $p$.
   
  \item For each group of $t$ consecutive filler candidates, we add
    $L+4t$ filler voters, each approving all the candidates in the
    group. %
  \end{enumerate}
  Altogether, $p$ has score $L-3$, all vertex candidates have score
  $L$, the filler candidates have at least $L+4t$ approvals each, and
  the dummy candidates have score $0$.  We set the committee size to
  be $k = t(n+1) + (n-h)+1$.  Prior to any bribery, each winning
  committee consists of $t(n+1)$ filler candidates and $(n-h)+1$
  vertex ones (chosen arbitrarily).  This completes our construction.
  
  Let us assume that $G$ has a size-$h$ independent set and denote it
  with $S$. For each $c_i \in S$ and each edge $e_t = \{c_i,c_j\}$, we
  bribe edge voter $v_{i,j}$ to move an approval from $c_i$ to a
  filler candidate right next to $c_j$. This is possible for each of
  the three edges incident to $c_i$ because $S$ is an independent set.
  As a consequence, each vertex from $S$ ends up with $L-3$ approvals.
  Thus only $n-h$ vertex candidates have score higher than $p$ and,
  so, there is a winning committee that includes $p$.
  
  For the other direction, let us assume that it is possible to ensure
  that $p$ belongs to some winning committee via a bribery of cost at
  most $B$. Let us consider the election after some such bribery was
  executed.  First, we note that all the filler candidates still have
  scores higher than $L+3t$ (this is so because decreasing a
  candidate's score always has at least unit cost and $B <
  t$). Similarly, $p$ still has score $L-3$ because increasing his or
  her score, even by one, costs at least $t$ (indeed, $p$ is separated
  from the other candidates by $t$~dummy candidates). Since $p$
  belongs to some winning committee, this means that at least $h$
  vertex voters must have ended up with score at most $L-3$. In fact,
  since our budget is $B = 3h$, a simple counting argument shows that
  exactly $h$ of them have score exactly $L-3$, and all the other ones
  still have score~$L$. Let~$S$ be the set of vertex candidates with
  score $L-3$. The only way to decrease the score of a vertex
  candidate $c_i$ from $L$ to $L-3$ by spending three units of the
  budget is to bribe each of the three edge voters from $V(c_i)$ to
  move an approval from $c_i$ to a filler candidate. However, if we
  bribe some edge voter $v_{i,j}$ to move an approval from $c_i$ to a
  filler candidate, then we cannot bribe that same voter to also move
  an approval away from $c_j$ (this would either cost more than $t$
  units of budget or would break the CI condition). Thus it must be
  the case that the candidates in $S$ correspond to a size-$h$
  independent set for~$G$.  \end{proof}

\hyphenation{fa-li-szew-ski}
For the VI domain, the complexity of our problem for unit prices
remains open, but for arbitrary prices we show that it
is $\np$-complete. Our proof works even for
the single-winner setting.
In the unrestricted
domain, the single-winner variant is in $\p$~\citep{fal:c:nonuniform-bribery}.

\begin{theorem}%
    \label{thm:vi-swaps}
      \textsc{AV-\$SwapApprovals-VI-Bribery} is $\np$-complete, even for
      the single-winner case (i.e., for committees of size one).
    \end{theorem}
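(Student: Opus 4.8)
The plan is to first settle membership in $\np$ and then reduce from \textsc{RX3C}. For membership, a certificate is simply the briber's modified profile: it is reachable from the input by swaps exactly when every voter keeps its ballot size, and the cheapest way to turn one ballot into another is a minimum-cost bipartite matching between the approvals that are removed and those that are added (using the given per-operation prices), so the total cost is computable in polynomial time; we then check that the modified profile is VI with respect to the given voter order and that $p$ is a tied winner. Hence the problem lies in $\np$. For hardness I would exploit the freedom of dollar prices: the idea is to freeze $p$'s score at a value $T$ (isolate $p$ behind a wall of dummy candidates and set every swap touching $p$'s neighbourhood to cost $+\infty$), so that making $p$ a single winner becomes equivalent to pushing every other candidate down to score at most $T$. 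Since a swap only transfers an approval between two candidates whose intervals are adjacent on the voter line, lowering a candidate means eroding it from an interval endpoint while the freed approval lands on a neighbouring sink, and the whole instance turns into the question of whether all excess mass can be routed to suitable sinks within budget while the final profile stays VI.

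The combinatorial core I would build is one \emph{set gadget} per $S_j\in\calS$ and one requirement per element $x_i\in X$, with $0/1/\infty$ prices chosen so that only the intended swaps are both legal and cheap. The gadget for $S_j=\{x_a,x_b,x_c\}$ offers a binary choice---\emph{select} or \emph{do not select} $S_j$---realised by whether its boundary approvals are swapped toward the element side or toward a neutral sink side, directly mirroring the left/right choice used in the proof of \cref{thm:delete-approvals-ci}. Selecting $S_j$ costs three units and is arranged to be the only way to lower all three of its elements simultaneously, so a gadget fires all-or-nothing. Setting the budget to $B=3n$ then forces exactly $n$ gadgets to fire, and because the $3n$ element requirements must each be satisfied at least once under a total of only $3n$ reductions, the fired sets must cover every element exactly once---an exact cover. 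The reverse direction is immediate: an exact cover dictates which $n$ gadgets to fire, at cost $3n$, after which every competitor sits at score at most $T$ and $p$ wins.

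The main obstacle---and the part demanding the most care---is the interaction between the VI property and the fact that each element lies in \emph{three} different sets. If each $x_i$ were a single global candidate, its approver set would have to be consecutive, forcing the three gadgets containing $x_i$ to be adjacent on the voter line; this is a consecutive-ones demand that no fixed layout can meet for all elements at once. I would resolve it by giving each element--set incidence its own local \emph{occurrence} candidate inside the relevant gadget, so that every candidate's approver set is a genuine interval by construction, and then installing a linking mechanism that ties the three occurrences of $x_i$ together into a single ``$x_i$ must be covered'' constraint without reintroducing a non-interval candidate. Building that linker as a legitimate VI structure, proving that the cheapest VI target making $p$ win must correspond to firing whole gadgets (this is exactly where the $+\infty$ prices and the dummy walls rule out bypasses), and checking that the budget count is tight, are the steps I expect to be most delicate; once the layout is pinned down, the correctness argument follows the now-standard template for bribery in structured elections.
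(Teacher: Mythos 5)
Your high-level strategy coincides with the paper's: reduce from \textsc{RX3C}, use $+\infty$ prices to freeze $p$'s score so that making $p$ win means pushing every competitor down to that score, and use a tight budget count to extract an exact cover. But there is a genuine gap precisely where you flag one: the ``linking mechanism'' that ties the three occurrences of an element $x_i$ into a single coverage constraint is never constructed --- it is only promised --- and that mechanism \emph{is} the technical content of the theorem. Moreover, the premise that drives you toward per-incidence occurrence candidates plus a linker is false. You argue that a global element candidate cannot work because its approvers must form an interval, which would force the three gadgets containing $x_i$ to sit adjacently on the voter line. That only follows if the set gadgets occupy disjoint blocks of voters. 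In the VI domain nothing forces such a layout: approver intervals of different candidates may overlap arbitrarily, and the paper exploits exactly this freedom, so the consecutive-ones obstruction you describe never arises.

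In the paper's construction every element $x_i$ \emph{is} a single candidate, approved by the interval $v_i,\ldots,v_{i+P}$ where $P=7n-1$ is $p$'s frozen score; having one excess approval, $x_i$ must shed an approval, and the only finite-cost way to do so is at its left endpoint $v_i$, which may donate the approval to $S'_t$ for any of the three sets $S_t\ni x_i$. Each set $S_t$ has an initially empty receiver $S'_t$ and a candidate $S''_t$ approved by \emph{all} voters, which must therefore lose $3n+2$ approvals; the VI requirement forces this erosion to come off one end of $S''_t$'s interval, and the prices make the top-end choice free but only feasible if all three voters $v_i$ with $x_i\in S_t$ donate their element approvals to $S'_t$ (consuming each element's unique cheap move), while the bottom-end choice costs $1$. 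With budget $2n$ and $3n$ set candidates, at least $n$ sets take the top-end option, and since each element can donate only once these sets are disjoint, hence an exact cover; the converse direction just executes the cover. This end-of-interval choice is indeed the VI analogue of the left/right choice you recall from \cref{thm:delete-approvals-ci}, but it pays per \emph{non-selected} set rather than per selected one, and it works only because the gadgets share voters rather than partitioning them. (Separately, your NP-membership matching argument needs a small fix: the cheapest way to realize one ballot change may chain swaps through intermediate candidates, so one should match against shortest-path costs, or simply note that a minimal witness uses polynomially many unit operations.)
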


                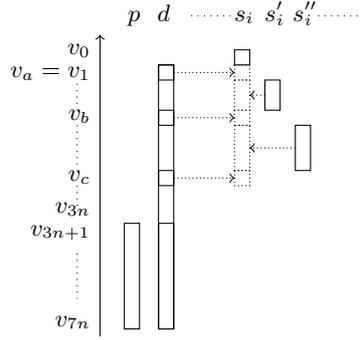
\begin{figure}
      \centering
      \begin{tikzpicture}
        \draw[->] (0,1) -- (0,5);
        \node[anchor=north east] at (0,5) {\small $v_0$};
        \node[anchor=north east] at (0,4.7) {\small $v_a = v_1$};
        \node[anchor=north east] at (0,4.125) {\small $v_b$};
        \node[anchor=north east] at (0,3.325) {\small $v_c$};
        \node[anchor=north east] at (0,2.9) {\small $v_{3n}$};        
        \node[anchor=north east] at (0,2.6) {\small $v_{3n+1}$};
        \node[anchor=north east] at (0,1.4) {\small $v_{7n}$};
        \draw[thin,dotted] (-0.3,4.35) -- (-0.3,4.0);
        \draw[thin,dotted] (-0.3,3.8) -- (-0.3,3.2);
        \draw[thin,dotted] (-0.3,3.0) -- (-0.3,2.7);
        \draw[thin,dotted] (-0.3,2.1) -- (-0.3,1.5);
    
        \node[anchor=south] at (0.45,5) {\small $p$};
        \draw (0.32,2.5) rectangle (0.52,1.1);
        \draw (0.77,2.5) rectangle (0.97,1.1);
        \node[anchor=south] at (0.92,5) {\small $d\phantom{p}$};
        \draw (0.77,4.6) rectangle (0.97,1.1);

        \node[anchor=south] at (1.98,5) {\small $s_i\phantom{p}$};
        \node[anchor=south] at (2.38,5) {\small $s'_i\phantom{p}$};
        \node[anchor=south] at (2.78,5) {\small $s''_i\phantom{p}$};

        \draw[dotted] (1.2,5.25) -- (1.77,5.25);
        \draw[dotted] (2.87,5.25) -- (3.5,5.25);

        \draw (1.77,4.8) rectangle (1.97,4.6);
        \draw (2.17,4.4) rectangle (2.37,4.0);
        \draw (2.57,3.8) rectangle (2.77,3.2);

        \draw (0.77,4.6) rectangle (0.97,4.4);
        \draw (0.77,4.0) rectangle (0.97,3.8);
        \draw (0.77,3.2) rectangle (0.97,3.0);    

        \draw[densely dotted] (1.77,4.6) rectangle (1.97,4.4);
        \draw[densely dotted] (1.77,4.0) rectangle (1.97,3.8);
        \draw[densely dotted] (1.77,3.2) rectangle (1.97,3.0);

                \draw[densely dotted] (1.77,4.4) rectangle (1.97,4.0);
        \draw[densely dotted] (1.77,3.8) rectangle (1.97,3.2);

        \draw[->, densely dotted] (0.97,4.5) -- (1.77,4.5);
        \draw[->, densely dotted] (0.97,3.9) -- (1.77,3.9);
        \draw[->, densely dotted] (0.97,3.1) -- (1.77,3.1);

        \draw[->, densely dotted] (2.17,4.2) -- (1.97,4.2);
        \draw[->, densely dotted] (2.57,3.5) -- (1.97,3.5);

      \end{tikzpicture}
      \caption{\label{fig:vi-swap}Illustration of the election from the proof of
        Theorem~\ref{thm:vi-swaps}, for the case where
        $\boldsymbol{S_i = \{x_a,x_b,x_c\}}$, where $\boldsymbol{a = 1 < b <c}$. Each row corresponds to a voter and each
        column corresponds to a candidate. Solid boxes show approvals prior
        bribery and dotted ones show approval moves.}
    \end{figure}

    \begin{proof}
      We give a reduction from \textsc{RX3C}. Let $I = (X,\calS)$ be
      an instance of \textsc{RX3C}, where
      $X = \{x_1, \ldots, x_{3n}\}$ is a universe and
      $\calS = \{S_1, \ldots, S_{3n}\}$ is a family of size-$3$
      subsets of $X$ (recall that each element from $X$ belongs to
      exactly three sets from $\calS$). We form a single-winner
      approval election with $7n+1$ voters
      $V = \{v_0, v_1, \ldots, v_{7n}\}$ and the following candidates:
      \begin{enumerate}
      \item We have the preferred candidate $p$ and the (to be
        defeated) current winner $d$.
      \item For each set $S_i \in \calS$ we have candidates $s_i$,
        $s'_i$, and $s''_i$.
      \end{enumerate}
      The approvals for these candidates, and the costs of moving
      them, are as follows (if we do not explicitly list the cost of
      moving some approval from a given candidate to another, then
      it is $+\infty$, i.e., this swap is impossible; the
      construction is illustrated in Figure~\ref{fig:vi-swap}):
      \begin{enumerate}       
      \item Candidate $p$ is approved by $4n$ voters, $v_{3n+1}, \ldots, v_{7n}$.
      \item Candidate $d$ is approved by $7n$ voters, $v_{1}, \ldots, v_{7n}$.
        For each set $S_i = \{x_a, x_b, x_c\}$, where $a < b < c$,
        the cost of moving $v_a$'s approval from $d$ to $s_i$ is $1$, and the costs of
        moving $v_b$'s and $v_c$'s approvals from $d$ to $s_i$ is $0$.
      \item For each set $S_i = \{x_a,x_b,x_c\}$, where $a < b < c$, we
        have the following approvals. Candidate $s_i$ is approved by
        voter $v_{a-1}$, candidate $s'_i$ is approved by voters
        $v_{a+1}, \ldots, v_{b-1}$, and candidate $s''_i$ is approved
        by voters $v_{b+1}, \ldots, v_{c-1}$. The cost of moving the
        approvals from $s'_i$ or from $s''_i$ to $s_i$ is $0$.
      \end{enumerate}
      One can verify that this election has the VI property for the
      natural order of the voters (i.e., for
      $v_0 \rhd \cdots \rhd v_{7n}$).  Candidate $d$ has $7n$
      approvals, $p$ has $4n$ approvals, and every other candidate has
      at most $3n+1$ approvals.  We claim that it is possible to
      ensure that $p$ becomes a winner of this election by
      approval-moves of cost at most $B = n$ (such that the election
      still has the VI property after these moves) if and only if $I$
      is a yes-instance of \textsc{RX3C}.

      For the first direction, let us assume that $I$ is a
      yes-instance and that $R \subseteq [3n]$ is a size-$n$ set such
      that $\bigcup_{i \in R}S_i = X$ (naturally, for each
      $t, \ell \in R$, sets $S_t$ and $S_\ell$ are disjoint). It is
      possible to ensure that~$p$ becomes a winner by making, for each
      $S_i = \{x_a, x_b, x_c\}$ such that $i \in R$ and $a < b < c$,
      the following swaps:
      \begin{enumerate}
      \item For each $j \in \{a,b,c\}$, we move $v_j$'s approval from
        $d$ to $s_i$ (the cost of moving $v_a$'s approval is $1$, the
        two other moves have cost $0$).
      \item For each $j \in \{a+1, \ldots, b-1\}$, we move $v_j$'s
        approval from $s'_i$ to $s_i$ (at cost 0).
      \item For each $j \in \{b+1, \ldots, c-1\}$, we move $v_j$'s
        approval from $s''_i$ to $s_i$ (at cost 0).
      \end{enumerate}
      In total, these moves cost $n$ and, since $R$ corresponds to a
      cover of~$X$, we have that: (a) $p$ is approved by $4n$ voters,
      (b) $d$ is approved by $4n$ voters, and (c) every other
      candidate is approved by at most $3n+1$ voters. Consequently,
      $p$ is among tied winners of this election.

      For the other direction, let us assume that there is a sequence
      of approval moves that costs at most $n$ and ensures that $p$ is
      a winner. Since all the moves of approvals from and to $p$ have
      cost $+\infty$, this means that every candidate ends up with at
      most $4n$ points. Thus $d$ loses at least $3n$ approvals.  No
      matter what swaps we do, for each $i \in [3n]$ each of $s_i$,
      $s'_i$ and $s''_i$ ends up with at most $3n+1$ approvals so we
      do not need to count their scores carefully (but we do need to
      take the VI condition into account for these candidates).

      Candidate $d$ can lose approvals only due to voters
      $v_1, \ldots, v_{3n}$ moving them to candidates in
      $\{s_1, \ldots, s_{3n}\}$. Let us consider some candidate $s_i$
      such that some voter $v_j$ moves an approval from~$d$ to~$s_i$
      and let $a < b < c$ be such that $S_i = \{x_a, x_b, x_c\}$. Due
      to the costs of moving approvals, it must be that
      $j \in \{a,b,c\}$. In fact, we claim that all three voters
      $v_a$, $v_b$, $v_c$ move approvals to~$s_i$, voters
      $v_{a+1}, \ldots, v_{b-1}$ move approvals from~$s'_i$ to~$s_i$,
      and voters $v_{b+1}, \ldots, v_{c-1}$ move approvals
      from~$s''_i$ to~$s_i$. This is so, because if those voters
      $v_a, \ldots, v_{j}$ would not move their approvals, then---due
      to the fact that $s_i$ is approved by voter $v_{a-1}$ (and this
      approval cannot move given our budget)---the approvals for $s_i$
      would not satisfy the VI property.  Further, voters
      $v_{j+1}, \ldots, v_c$ also need to move their approvals due to
      a counting argument: The cost of moving $v_a$'s approval from~$d$
      to $s_i$ is $1$. If we did not move $v_c$'s approvals from~$d$
      to $s_i$, then it would mean that (globally in our bribery)
      the average cost of moving an approval from $d$ to some
      candidate in $\{s_1, \ldots, s_{3n}\}$ would be higher than
      $\nicefrac{1}{3}$. But since our budget is $n$ and we need to
      move $3n$ approvals from $d$ to these candidates, this is
      impossible.

      Let $R = \{ i \in [3n] \mid$ some voter moves an approval from
      candidate $d$ to $s_i\}$. By the preceding paragraph, $R$
      contains $n$ elements and for each two $i, j \in R$ it must be
      that sets $S_i$ and $S_j$ are disjoint. Hence, $I$ is a
      yes-instance.
    \end{proof}

\section{Destructive Bribery}
\label{sec:destructive}

We conclude by considering destructive variants of our
problems, where the goal is to ensure that a given candidate,
often denoted $d$, does not belong to any winning committee (our proofs easily
trasfer to an alternative model, where one requires only that $d$ does not
belong to at least one winning committe; see~Appendix~\ref{app:other_models}) . We
use the same bribery actions, except that now we also consider a
variant of swapping approvals where we can only move approvals away
from $d$.

The destructive variant has been studied for the unrestricted setting
by \citet{yan:c:destructive-approval-bribery}, for the unpriced cases
of adding and deleting approvals. Thus we first establish its
complexity also for the priced cases and for swapping
approvals. %
The complexity stays the same as for the constructive variants (the
theorem below includes the results of
\citet{yan:c:destructive-approval-bribery} as special cases).

\begin{theorem}%
    \label{thm:destructive-unrestricted}
  For unit prices, all destructive variants of our bribery problems
  for the unrestricted setting are in $\p$. For arbitrary prices, the
  cases of adding and deleting approvals are in $\p$, but destructive
  variants \textsc{AV-\$SwapApprovalsAwayFromD-Bribery} and
  \textsc{AV-\$SwapApprovals-Bribery} are $\np$-complete.
\end{theorem}
    \begin{proof}
        {\bf Case of destructive \textsc{AV-SwapApprovals-Bribery}:}
        Let our input consist of an election $E = (C,V)$, committee size $k$, candidate $d \in C$, and budget $B$.
        We can assume that the solution does not contain any transitive swaps, i.e.,
        for $ \{c_1, c_2, c_3 \} \subseteq C$ a swap of an approval from $c_1$ to $c_2$ and from $c_2$ to $c_3$
        (because we could move from $c_1$ to $c_3$ directly).
        Let us consider some optimal solution.
        Assume that all swaps from $d$ were already executed and candidate $d$ is left with $t$ approvals.
        Now we treat $t$ as a fixed score of $d$ and based on the scores prior to making the remaining moves, we group all the candidates into three sets:
        \begin{enumerate}
            \item $L$ -- candidates with score lower than $t$. %
            \item $T$ -- candidates with score equal to $t$. %
            \item $G$ -- candidates with score greater than $t$. %
        \end{enumerate}
        Now let us consider the remaining approval moves in the solution.
        If there is any swap that moves an approval from candidate $x \in L$, or to some candidate $x \in L$, then we can replace it with a swap of an approval from $d$ to $x$.
        Such a move is possible because candidate $x$ has score lower than $d$, therefore a vote with an approval for $d$ and without an approval for $x$ must exist.
        It if were a move where $x$ was getting an approval, then this solution clearly remains correct.
        If $x$ were giving an approval to some other candidate $y$, then by lowering score of $d$ we achieve that distance of scores of $d$ and $y$ remain the same.
        
        If there is any swap that moves an approval from candidate $x \in T$ to candidate $y \in T$, then there is a vote with approval for $x$ and without approval for $y$.
        If this vote has an approval for $d$, then we can make the swap from $d$ instead of from $x$.
        If this vote does not have an approval for $d$, then there must be a vote where $d$ is approved but $x$ is not and we make the swap from $d$ to $x$ within that vote,
        achieving the same effect.
        
        If we implement all the above reasoning, we can be sure that there are left only swaps between candidates from the sets $T$ and~$G$.
        By performing these swaps, some candidates from $G$ may move to sets $L$ or $T$, and some candidates from $T$ may move to sets $L$ or $G$.
        However, if there is a single vote where approval for $d$ may be moved to any other candidate, then all candidates from $T$ are moved to $G$ and all candidates from $G$ remain in $G$.
        This is clearly a move which yields a solution at least as good as the initial one.
        If it is impossible to make a swap from $d$, it means that all voters who approve $d$ also approve all other candidates and all other candidates belong to either $T$ or $G$.
        To solve this case, repeatedly for each candidate from $G$ who has at least 2 approvals of advantage over $d$ we move an approval  to some candidate $x \in T$, moving $x$ to $G$ (this is correct because all candidates in $T$ are approved by exactly the same voters as $d$).
        
        The above reasoning shows that there is an optimal solution that consists only of two certain types of swaps,
        which leads to the following algorithm (by guessing $t$ we mean trying it for all possible values):
        \begin{enumerate}
            \item We guess the number of approvals $t$  that $d$ ends up with
            \item In a loop, until $d$ has exactly $t$ approvals,
                among all candidates who have less than $t+1$ approvals find the candidate $x$ that has the highest score
                and transfer an approval from $d$ to $x$.
                If there is such a candidate $x$, there exists a vote where we can move an approval from $d$ to $x$, because $d$ has at least $t+1$ approvals and $x$ has fewer.
            \item If at this point $d$ does not belong to any winning committee then we accept.
                Otherwise, either we guessed $t$ incorrectly or it is a situation where all the voters that approve $d$ also approve every other candidate.
            \item If we have not accepted yet, in a loop until the number of candidates with more than $t$ approvals is equal to $k$,
                find a candidate who has at least $t+2$ approvals and move one of these approvals to some other candidate who has exactly $t$ approvals (if such a candidate does not exist then reject for this value of $t$).
            \item If the number of swaps created during the algorithm exceeds $B$, then we reject (for this value of $t$) and accept in the opposite case.
        \end{enumerate}
        
        {\bf Case of destructive \textsc{AV-SwapApprovalsAwayFromD-Bribery}:}
        Let our input consist of an election $E = (C,V)$, committee size $k$, candidate $d \in C$, and budget $B$.
        As swaps from $d$ are the only available operation, our goal is to maximize the number of candidates with score greater than that of $d$ (without exceeding the budget).
        Let us assume that the solution exists and leaves $d$ with exactly $t$ approvals.
        There is no point in swapping approvals to candidates who, from the beginning, have more than $t$ approvals,
        and there is no point in giving any candidate approvals so that he or she has more than $t+1$ of them.
        
        Hence, given $t$, a simple greedy algorithm can decide whether the solution exists.
        It processes the candidates who have scores lower than $t+1$ in descending order of their scores
        and moves approvals from $d$ to them until either we exceed the budget or $d$ ends up with exactly $t$
        approvals. At this point, we check if $d$ is in some winning committee. If not, we accept. Otherwise, we reject.
        \smallskip

        {\bf Case of destructive \textsc{AV-\$AddApprovals-Bribery} and \textsc{AV-\$DeleteApprvals-Bribery}:}
        For the case of adding approvals, for each candidate we can independently compute
        a function saying how much it costs to increase his or her score to a given value (if a candidate already
        has score higher than $d$, then this value is equal to $0$).
        Then we select $k$ candidates for whom getting the score just above that of $d$
        is cheapest. If the sum of their costs is at most equal to the budget, we accept. Otherwise, we reject.
        
        For \textsc{AV-\$DelApprovals-Bribery}, it never makes sense to delete approvals from candidates
        other than $d$. Thus, we keep on deleting $d$'s cheapest approvals until he or she
        is not among the winners (we accept if doing so is within the budget and we reject otherwise).
        \smallskip

        {\bf Case of destructive \textsc{AV-\$SwapApprovalsAwayFromD-Bribery}
        and destructive \textsc{AV-\$SwapApprovals-Bribery}:}
        
        The reduction for~destructive \textsc{AV-\$SwapApprovalsAwayFromD-CI-Bribery}
        and destructive \textsc{AV-SwapApprovals-CI-Bribery}
				from~\Cref{thm:destructive-ci} (described later) can also be applied to
				this case.

        This follows from the fact that the candidate interval property does not constrain
        swapping approvals in elections, in which each voter supports only one
        candidate. The reduction from~\Cref{thm:destructive-ci} uses only
        such voters. Hence, the result therein holds also for destructive
        \textsc{AV-\$SwapAwayFromDApprovals-Bribery} and destructive
        \textsc{AV-\$SwapApprovals-Bribery}.
    \end{proof}

    For the VI case, we also obtain (or, fail to obtain)
    almost %
    the same results as in the constructive case (for
    \textsc{AV-\$SwapApprovals-VI-Bribery} we use the same proof as in
    the constructive case, except $d$ is the distinguished candidate
    and $p$ has one extra approval).

\begin{theorem}
  \label{thm:destructive-vi}
  Destructive variants of
  \textsc{AV-\$AddApprovals-VI-Bribery} and \textsc{AV-\$DelApprovals-VI-Bribery}
  are in $\p$.
  Destructive variant of \textsc{AV-\$SwapApprovals-VI-Bribery} is $\np$-complete.
\end{theorem}
\begin{proof}
        {\bf Case of destructive \textsc{AV-\$AddApprovals-VI-Bribery} and \textsc{AV-\$DeleteApprvals-VI-Bribery}:}
        For the case of adding approvals, for each candidate we can independently compute
        a function saying how much it costs to increase his or her score to a given value (if a candidate already
        has score higher than $d$, then this value is equal to $0$):
        This is just slightly more involved than the unrestricted case.
        For each candidate~$c$ we check for each pair of integers~$\{x,y\}$ with~$\scoreof{c}+x+y>\scoreof{d}$ how much it costs to add~$x$ approvals on top of the
        top-most original approval and to add~$y$ approvals below the bottom-most original approval.
        (This way, we check every possibility that keeps the interval of approvals intact.)
        Then we select $k$ candidates for whom getting the score above that of $d$
        is cheapest. If the sum of their costs is at most equal to the budget, we accept. Otherwise, we reject.
        
        For \textsc{AV-\$DelApprovals-Bribery}, it never makes sense to delete approvals from candidates
        other than $d$. Thus, we check for each pair of integers~$\{x,y\}$ with~$x+y<\scoreof{d}$ if deleting the $x$ top-most
        and the $y$ bottom-most approvals of~$d$ is possible with the given budget.
        We reject if this is not the case for any integer pair.
        \smallskip

        {\bf Case of destructive \textsc{AV-\$SwapApprovals-VI-Bribery}:}
        See proof of Theorem~\ref{thm:vi-swaps}.
\end{proof}

The case of CI preferences appears to be the most challenging one. Not only do we obtain fewer results than in the constructive setting, but those that we do obtain are less satisfying. Let us illustrate this with \textsc{AV-\$AddApprovals-CI-Bribery}. We show that the problem is $\np$-complete, but to do so, we use a somewhat unappealing trick. Namely, we include some voters who initially do not approve any candidates and we set their price functions so that we can choose one out of four candidates, possibly located far apart in the societal axis, to whom these voters add an approval. Since we did not put extra conditions on the price functions, this is formally correct, but is intuitively unappealing. We also need similar tricks in two other $\np$-completenss proofs in this section. For example, for the case of swapping approvals away from $d$ we use voters that approve only a single candidate, so we can move this approval arbitrarily (up to constraints implemented with the price function). Interestingly, if we required each voter to approve at least two candidates, the problem would be in $\p$.

\begin{theorem}%
  \label{thm:destructive-ci}
  Destructive variants of 
  \textsc{AV-\$AddApprovals-CI-Bribery}, \textsc{AV-\$SwapApprovalsAwayFromD-CI-Bribery},
  and 
  \textsc{AV-\$SwapApprovals-CI-Bribery} are $\np$-complete.
  Destructive variant of
  \textsc{AV-SwapApprovalsAwayFromD-CI-Bribery} is in $\p$.
\end{theorem}
\begin{proof}
  {\bf Case of destructive \textsc{AV-\$AddApprovals-CI-Bribery}:} We
  give a reduction from \textsc{RX3C}. Our input consists of a set
  $X = \{x_1, \ldots, x_{3k}\}$ and a family of sets
  $\calS = \{S_1, \ldots, S_{3n}\}$. We form an election with
  candidate set
  $X \cup \calS \cup \{d\} \cup \{a_1, \ldots\, a_{3n}\} \cup \{b_1,
  \cup, b_{3n}\}$, where we want $d$ to not be a part of any winning
  committee. The committee size is $k = 5n$.  The societal axis is
  $x_1 \rhd a_1 \rhd \cdots \rhd x_{3n} \rhd a_{3n} \rhd S_1 \rhd b_1 \rhd \cdots \rhd S_{3n} \rhd b_{3n} \rhd d$.
    We have voters so that, initially:
  \begin{enumerate}
  \item $d$ has 3 approvals,
  \item each $S_i$ has 1 approval,
  \item each $x_j$ has 3 approvals.
  \item all the $a_t$ and $b_t$ candidates have zero approvals.
  \end{enumerate}
  The voters that implement these scores have such high prices, so we cannot add approvals to them. 
  Further, for each set $S_t = \{x_i, x_j, x_k\}$ we form three \emph{solution} voters. Each of these three voters has an empty approval set and adding each approval costs $+\infty$, except for adding approvals for $S_t$, $x_i$, $x_j$, and $x_k$, which have unit cost.
  The budget is $9n$.  
  Prior to any bribery, $d$ is in some winning committees and we need to get $5n$ candidates to have score above $3$ to prevent this.
    
  If there is an exact cover of $X$ with $n$ sets from $\calS$, then we can add approvals so that for each
  set from the cover, the three corresponding solution voters approve its members (this costs $3n$ and ensures that each candidate from $X$ has $4$ approvals). For the $2n$ remaining sets (not forming the cover), each solution voter adds an approval
  for its corresponding set (this costs $6n$ and ensures that $2n$ set candidates have $4$ approvals each). Consequently, there are $5n$ candidates with score higher than $d$ and, so, $d$ does not belong to any winning committee.
   
  Next, assume that there is a bribery after which $d$ does not belong to any winning committee. After performing
  this bribery, there are $x$ set candidates that have at least $4$ approvals each, and $y$ candidates from $X$ that have at least $4$ approvals each. Adding these approvals has cost:
  \[
    3x + y \leq 9n
  \]
  This bribery ensures that there are at least $5n$ candidates with score higher than that of $d$, so we know that
  $x+y \geq 5n$. Together, this implies that $x \leq 2n$, so $y \geq 3n$, which means that $y = 3n$ and, consequently, $x = 2n$.
  So there are $2n$ sets that get (at least) $4$ approvals each. But for these sets, we cannot give approvals to their members of $X$. So the remaining $n$ sets must form a cover.
  \smallskip

        {\bf Case of destructive \textsc{AV-\$SwapApprovalsAwayFromD-CI-Bribery}
        and destructive \textsc{AV-\$SwapApprovals-CI-Bribery}:}

        We reduce from \textsc{Independent Set} on regular graphs, where given a
        (simple) $\Delta$-regular graph~$G=(U,E)$ and a positive integer~$h$, we ask
        whether there is a set of at least~$h$~vertices, in which no two vertices share
        an edge; the sought set is called an~\emph{independent set}. For some~$u \in U$,
        let~$E(u) = \{e: u \in e\}$ be the set of edges \emph{incident} to~$u$, that is,
        the edges that have an endpoint in~$u$. In particular, for each vertex~$u$ of
        some $\Delta$-regular graph, it holds that~$|E(u)| = \Delta$. In our proof, it is
        convenient to define an independent set on $\Delta$-regular graphs as a set~$S$
        of vertices of the graph that have pairwise disjoint sets of incident edges,
        that is, $\left| \bigcup_{u \in S} E(u) \right| = \Delta |S|$.
        
        Consider an instance~$I = (G, h)$ of \textsc{Independent Set}. Let~$G = (U, E)$ be a
        $\Delta$-regular graph with the set~$U = \{u_1, u_2, \ldots u_n\}$~of vertices
        and the set~$E = \{e_1, e_2, \ldots, e_m\}$~of edges.
        
        We construct an approval election with committee size~$k = h$ and the candidate
        set~$C$. Starting with an empty set~$C$, for each vertex~$u \in U$, we add a
        candidate~$c(u)$ and for each edge~$e \in E$ we add a candidate~$c(e)$.
        Finally, we add a~\emph{despised candidate}~$d$ obtaining, in total, $|C| = n +
        m + 1$~candidates. The collection of voters~$V$ is build up by two
        groups of voters. The first group consists of $(\Delta  -1)$~voters approving~$d$;
        these voters cannot be bribed. In the second group, for each edge~$e = \{u, u'\} \in E$,
        there is a voter~$v(e)$ that approves~$d$ whose approval can be swapped to
        support one of~$c(u)$, $c(u')$, and $c(e)$ at cost $0$; any other swap is
        impossible. Last but not least, we set the budget~$B$ of our new instance to
        be~$0$. Accordingly, we implement the forbidden swaps that we described above by
        giving them cost~$1$. This concludes the construction of the instance~$I'$
        of destructive \textsc{AV-\$SwapAwayFromDApprovals-CI-Bribery} that we reduce
        instance~$I$ to.
        
        To prove the reduction's correctness, let us first assume that~$I$ admits an
        independent set~$S$ of size at least~$h$. Without loss of generality, we assume
        that~$S = \{u_1, u_2, \ldots, u_h\}$ (indeed, each subset of an independent set
        is an independent set itself and the vertex names can be relabeled). We construct
        a solution to~$I'$ by swapping the votes as follows. Consider a vertex $u \in
        S$. For each incident edge~$e \in E(u)$ of $u$, we make~$v(e)$ support~$c(u)$
        instead of~$d$; by the construction, it comes at cost zero. For every remaining
        edge, say~$e \in E \setminus \bigcup_{u \in S} E(u) $, we swap the approval of
        the corresponding vote~$v(e)$ away from~$d$ and give it to~$c(e)$; this action
        also costs zero budget units. As a result, $d$ gets~$\Delta -1$~approvals,
        candidates~$c(u_1)$ to~$c(u_h)$ get~$\Delta$~approvals each, and the remaining
        candidates get either one or zero approvals. Hence, the winning committee of
        size~$h$ does not contain~$d$ as it exactly consists of candidates~$c(u_1)$
        to~$c(u_h)$ and the budget spent is $0$.
        
        To prove the opposite direction, assume that there is a bribery action leading
        to an election~$E'$ in which all winning committees of size~$h$ do not
        contain~$d$. We claim that each of these committees consists of candidates
        corresponding to the vertices that form an independent set of
        size~$h$ in~$G$. However, for ease of presentation and without loss of
        generality, we fix one such committee~$S$. Observe that the minimal score of~$d$
        is~$\Delta-1$ as there are exactly $\Delta-1$~voters that cannot be bribed that
        approve~$d$. Hence, there are at least~$h$~candidates with having at
        least~$\Delta$~approvals in~$E'$.  Note that, by construction, no
        candidate~$c(e)$ corresponding to an edge~$e \in E$ has a score higher than one
        in~$E$. Let us fix some~$u \in U$. We now show that
        candidate~$c(u)$~corresponding to~$u \in U$ cannot obtain more than
        $\Delta$~approvals in~$E$. Indeed, $c(u)$ can only get approvals by bribing
        voters corresponding to the edges incident to~$u$. Since $u$ belongs to a
        $\Delta$-regular input graph, it has exactly~$\Delta$ incident edges. So, there
        is at least~$h$~candidates (corresponding to vertices of~$G$) with
        exactly~$\Delta$~approvals in~$E$. From the discussion above it is also clear
        that to make~$u$ get $\Delta$~approvals, one needs to bribe all voters
        corresponding to the incident edges of~$u$ and, obviously, each of these voters
        contribute to the score of exactly one candidate. Hence, to make all candidates
        of the winning committee~$S$ have score~$\Delta$, it must hold that for each
        candidate~$c(u) \in S$, representing vertex~$u$, there is a disjoint set of~$\Delta$
        voters, representing edges incident to~$u$, that has to be bribed. So the set of
        vertices corresponding to members of~$S$ must be an independent set.
        
        This reduction clearly works in polynomial time and forms a parameterized
        reduction with respect to the value of the committee size, which concludes the
        proof. Since instance~$I'$ contains votes that initially approve only one
        candidate, the argument is also correct
        for the destructive~\textsc{AV-\$SwapApprovals-CI-Bribery} problem.

     {\bf Case of destructive \textsc{AV-SwapApprovalsAwayFromD-CI-Bribery}:}
  We solve the problem via dynamic programming.
  The crucial observations are as follows.
  First, each voter who approves~$d$ and some other candidate(s) either to the left, or to the right
  can move the approval to just one candidate at unit price.
  For each candidate~$c_i$, let~$Y_i$ denote the number of voters that approve~$d$
  and every candidate between~$d$ and~$c_i$ in the societal order.
  Second, for voters who only approve~$d$ we can move the approval to any other candidate at unit cost.
  Let~$X$ denote the total number of such voters.

  First, we guess the final score~$s^*$ of~$d$ after bribery: it is at least $\scoreof{d}-B$ and at most $\scoreof{d}-1$.
  (Note that we need to swap away~$\scoreof{d}-s^*<B$ approvals from~$d$.)
  We have the following binary table~$T[i,k',B',x] \in \{0,1\}$,
  where $T[i,k',B',x]=1$ when it is possible move approvals from~$d$ to the first~$i$ candidates in the societal order,
  while ensuring that~$k'$ of them have score larger than~$s^*$, exactly~$B'$ approvals where swapped away from~$d$,
  and in total~$x$ of these approvals have been swapped by voters only approving~$d$.
  We initialize the table by setting first setting $T[1,*,*,*]:=0$ and then updating:
  (a) $T[1,0,B'',x']:=1$ if~$\scoreof{c_1}+B''\le s^*$, $x'\le X$, $x'\le B''$, and~$B''-x'\le Y_1$,
  (b) $T[1,1,B'',x']:=1$ if~$\scoreof{c_1}+B''> s^*$, $x'\le X$, $x'\le B''$, and~$B''-x'\le Y_1$.
  Cases under~(a) capture all possibilities to move approvals from~$d$ to the leftmost candidate
  while keeping its score to at most~$s^*$ while cases under~(b) capture all possibilities to move approvals
  from~$d$ to the leftmost candidate while ensuring its score to be larger than the score~$s^*$ of~$d$.
  Entries for~$i>1$ (with $i$~increasing from~$2$ to~$m$) are computed as follows.
  Set $T[i,*,*,*]:=0$ and then update:
  (a) $T[i,k',B'',x']:=1$ if there are some~$B^*$ and~$x^*$ with $0 \le B^* \le B''$ and $0 \le x^* \le x'$ such that
  $\scoreof{c_i}+B^*\le s^*$, $x'\le X$, $x'\le B''$, $B^*-x^*\le Y_i$, and $T[i-1,k',B''-B^*,x'-x^*]=1$.
  (b) $T[i,k',B'',x']:=1$ if there are some~$B^*$ and~$x^*$ with $0 \le B^* \le B''$ and $0 \le x^* \le x'$ such that
  $\scoreof{c_i}+B^* > s^*$, $x'\le X$, $x'\le B''$, $B^*-x^*\le Y_i$, and $T[i-1,k'-1,B''-B^*,x'-x^*]=1$.
  Similar to the initialization, cases under~(a) capture all possibilities to move approvals from~$d$ to the first~$i$ candidates
  while keeping the score of~$c_i$ to at most~$s^*$ while cases under~(b) capture all possibilities to move approvals
  from~$d$ to the first~$i$ candidate while ensuring that the score of~$c_i$ is than the score~$s^*$ of~$d$.
  
  Finally, we have if yes-instance if there is some entry~$T[m,k^*,\scoreof{d}-s^*,x^*]=1$ with~$x^*<X$ and~$k^* \ge k$.
\end{proof}

\section{Summary}
We have studied bribery in multiwinner approval elections, for the
case of candidate interval (CI) and voter interval (VI)
preferences. Depending on the setting, our problem can either be
easier, harder, or equally difficult as in the unrestricted domain.
It would be interesting to extend our work by considering different
voting rules (in particular, the Approval-Based Chamberlin--Courant
rule~\citep{cha-cou:j:cc,pro-ros-zoh:j:proportional-representation,bet-sli-uhl:j:mon-cc})
and by seeking parameterized complexity results.

\section{Acknowledgements}
We want to thank the reviewers for their helpful comments.
This project has received funding from the European 
Research Council (ERC) under the European Union's Horizon 2020 
research and innovation programme (grant agreement No 101002854).
DK acknowledges the support of the GA\v{C}R project No.~22-19557S.

\begin{center}
\includegraphics[width=2.5cm]{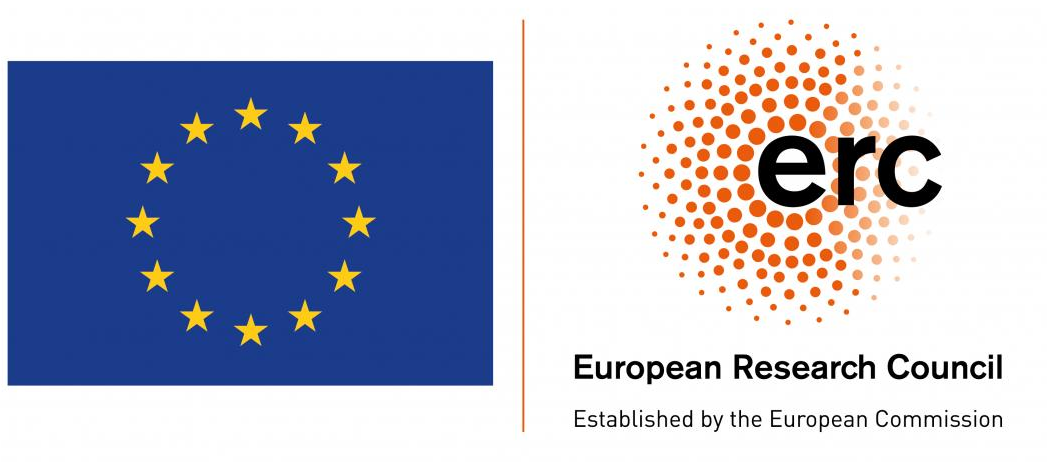}
\end{center}

\bibliographystyle{abbrvnat}
\bibliography{bib-arxiv}

\begin{thebibliography}{32}
\providecommand{\natexlab}[1]{#1}
\providecommand{\url}[1]{\texttt{#1}}
\expandafter\ifx\csname urlstyle\endcsname\relax
  \providecommand{\doi}[1]{doi: #1}\else
  \providecommand{\doi}{doi: \begingroup \urlstyle{rm}\Url}\fi

\bibitem[Baumeister and Hogrebe(2021)]{bau-hog:c:counting-bribery}
D.~Baumeister and T.~Hogrebe.
\newblock On the complexity of predicting election outcomes and estimating
  their robustness.
\newblock In \emph{Proceedings of EUMAS-21}, pages 228--244, 2021.

\bibitem[Betzler et~al.(2013)Betzler, Slinko, and
  Uhlmann]{bet-sli-uhl:j:mon-cc}
N.~Betzler, A.~Slinko, and J.~Uhlmann.
\newblock On the computation of fully proportional representation.
\newblock \emph{Journal of Artificial Intelligence Research}, 47:\penalty0
  475--519, 2013.

\bibitem[Black(1958)]{bla:b:polsci:committees-elections}
D.~Black.
\newblock \emph{The Theory of Committees and Elections}.
\newblock Cambridge University Press, 1958.

\bibitem[Boehmer et~al.(2021)Boehmer, Bredereck, Faliszewski, and
  Niedermeier]{boe-bre-fal-nie:c:counting-swap-bribery}
N.~Boehmer, R.~Bredereck, P.~Faliszewski, and R.~Niedermeier.
\newblock Winner robustness via swap- and shift-bribery: {Parameterized}
  counting complexity and experiments.
\newblock In \emph{Proceedings of IJCAI-21}, pages 52--58, 2021.

\bibitem[Booth and Lueker(1976)]{boo-lue:j:consecutive-ones-property}
K.~Booth and G.~Lueker.
\newblock Testing for the consecutive ones property, interval graphs, and graph
  planarity using {PQ}-tree algorithms.
\newblock \emph{Journal of Computer and System Sciences}, 13\penalty0
  (3):\penalty0 335--379, 1976.

\bibitem[Brandt et~al.(2015)Brandt, Brill, Hemaspaandra, and
  Hemaspaandra]{bra-bri-hem-hem:j:sp2}
F.~Brandt, M.~Brill, E.~Hemaspaandra, and L.~Hemaspaandra.
\newblock Bypassing combinatorial protections: {Polynomial-Time} algorithms for
  single-peaked electorates.
\newblock \emph{Journal of Artificial Intelligence Research}, 53:\penalty0
  439--496, 2015.

\bibitem[Bredereck et~al.(2021)Bredereck, Faliszewski, Kaczmarczyk,
  Niedermeier, Skowron, and Talmon]{bre-fal-kac-nie-sko-tal:j:robustness}
R.~Bredereck, P.~Faliszewski, A.~Kaczmarczyk, R.~Niedermeier, P.~Skowron, and
  N.~Talmon.
\newblock Robustness among multiwinner voting rules.
\newblock \emph{Artificial Intelligence}, 290:\penalty0 103403, 2021.

\bibitem[Chamberlin and Courant(1983)]{cha-cou:j:cc}
B.~Chamberlin and P.~Courant.
\newblock Representative deliberations and representative decisions:
  {Proportional} representation and the {B}orda rule.
\newblock \emph{American Political Science Review}, 77\penalty0 (3):\penalty0
  718--733, 1983.

\bibitem[Elkind and Lackner(2015)]{elk-lac:c:approval-sp-sc}
E.~Elkind and M.~Lackner.
\newblock Structure in dichotomous preferences.
\newblock In \emph{Proceedings of IJCAI-15}, pages 2019--2025, 2015.

\bibitem[Elkind et~al.(2020)Elkind, Faliszewski, Gupta, and
  Roy]{elk-fal-gup-roy:c:swap-shift-sp-sc}
E.~Elkind, P.~Faliszewski, S.~Gupta, and S.~Roy.
\newblock Algorithms for swap and shift bribery in structured elections.
\newblock In \emph{Proceedings of AAMAS-20}, pages 366--374, 2020.

\bibitem[Faliszewski(2008)]{fal:c:nonuniform-bribery}
P.~Faliszewski.
\newblock Nonuniform bribery (short paper).
\newblock In \emph{Proceedings of AAMAS-08}, pages 1569--1572, 2008.

\bibitem[Faliszewski and Rothe(2015)]{fal-rot:b:control-bribery}
P.~Faliszewski and J.~Rothe.
\newblock Control and bribery in voting.
\newblock In F.~Brandt, V.~Conitzer, U.~Endriss, J.~Lang, and A.~D. Procaccia,
  editors, \emph{Handbook of Computational Social Choice}, chapter~7. Cambridge
  University Press, 2015.

\bibitem[Faliszewski et~al.(2009)Faliszewski, Hemaspaandra, and
  Hemaspaandra]{fal-hem-hem:j:bribery}
P.~Faliszewski, E.~Hemaspaandra, and L.~Hemaspaandra.
\newblock How hard is bribery in elections?
\newblock \emph{Journal of Artificial Intelligence Research}, 35:\penalty0
  485--532, 2009.

\bibitem[Faliszewski et~al.(2011)Faliszewski, Hemaspaandra, Hemaspaandra, and
  Rothe]{fal-hem-hem-rot:j:single-peaked-preferences}
P.~Faliszewski, E.~Hemaspaandra, L.~Hemaspaandra, and J.~Rothe.
\newblock The shield that never was: {Societies} with single-peaked preferences
  are more open to manipulation and control.
\newblock \emph{Information and Computation}, 209\penalty0 (2):\penalty0
  89--107, 2011.

\bibitem[Faliszewski et~al.(2017{\natexlab{a}})Faliszewski, Skowron, Slinko,
  and Talmon]{fal-sko-sli-tal:b:multiwinner-voting}
P.~Faliszewski, P.~Skowron, A.~Slinko, and N.~Talmon.
\newblock Multiwinner voting: {A} new challenge for social choice theory.
\newblock In U.~Endriss, editor, \emph{Trends in Computational Social Choice}.
  AI Access Foundation, 2017{\natexlab{a}}.

\bibitem[Faliszewski et~al.(2017{\natexlab{b}})Faliszewski, Skowron, and
  Talmon]{fal-sko-tal:c:bribery-measure}
P.~Faliszewski, P.~Skowron, and N.~Talmon.
\newblock Bribery as a measure of candidate success: {Complexity} results for
  approval-based multiwinner rules.
\newblock In \emph{Proceedings of AAMAS-17}, pages 6--14, 2017{\natexlab{b}}.

\bibitem[Fitzsimmons and
  Hemaspaandra(2015)]{fit-hem:c:ties-sp-manipulation-bribery}
Z.~Fitzsimmons and E.~Hemaspaandra.
\newblock Complexity of manipulative actions when voting with ties.
\newblock In \emph{Proceedings of ADT-15}, pages 103--119, 2015.

\bibitem[Garey and Johnson(1979)]{gar-joh:b:int}
M.~Garey and D.~Johnson.
\newblock \emph{Computers and Intractability: {A} Guide to the Theory of
  {NP}-Completeness}.
\newblock {W. H. Freeman and Company}, 1979.

\bibitem[Gonzalez(1985)]{gon:j:x3c}
T.~Gonzalez.
\newblock Clustering to minimize the maximum intercluster distance.
\newblock \emph{Theoretical Computer Science}, 38:\penalty0 293--306, 1985.

\bibitem[Lackner and Skowron(2021)]{lac-sko:t:approval-survey}
M.~Lackner and P.~Skowron.
\newblock Approval-based committee voting: {Axioms}, algorithms, and
  applications.
\newblock Technical Report arXiv:2007.01795~[cs.GT], arXiv.org, Aug. 2021.

\bibitem[Magiera and Faliszewski(2017)]{mag-fal:j:sc-control}
K.~Magiera and P.~Faliszewski.
\newblock How hard is control in single-crossing elections?
\newblock \emph{Autonomous Agents and Multiagent Systems}, 31\penalty0
  (3):\penalty0 606--627, 2017.

\bibitem[Menon and Larson(2016)]{men-lar:c:bribery-sp-hard}
V.~Menon and K.~Larson.
\newblock Reinstating combinatorial protections for manipulation and bribery in
  single-peaked and nearly single-peaked electorates.
\newblock In \emph{Proceedings of AAAI-16}, pages 565--571, 2016.

\bibitem[Mirrlees(1971)]{mir:j:single-crossing}
J.~Mirrlees.
\newblock An exploration in the theory of optimal income taxation.
\newblock \emph{Review of Economic Studies}, 38:\penalty0 175--208, 1971.

\bibitem[Peters and Lackner(2020)]{pet-lac:j:spoc}
D.~Peters and M.~Lackner.
\newblock Preferences single-peaked on a circle.
\newblock \emph{Journal of Artificial Intelligence Research}, 68:\penalty0
  463--502, 2020.

\bibitem[Procaccia et~al.(2008)Procaccia, Rosenschein, and
  Zohar]{pro-ros-zoh:j:proportional-representation}
A.~Procaccia, J.~Rosenschein, and A.~Zohar.
\newblock On the complexity of achieving proportional representation.
\newblock \emph{Social Choice and Welfare}, 30\penalty0 (3):\penalty0 353--362,
  2008.

\bibitem[Roberts(1977)]{rob:j:tax}
K.~Roberts.
\newblock Voting over income tax schedules.
\newblock \emph{Journal of Public Economics}, 8\penalty0 (3):\penalty0
  329--340, 1977.

\bibitem[Schlotter et~al.(2017)Schlotter, Faliszewski, and
  Elkind]{elk-fal-sch:j:fallback-shift}
I.~Schlotter, P.~Faliszewski, and E.~Elkind.
\newblock Campaign management under approval-driven voting rules.
\newblock \emph{Algorithmica}, 77\penalty0 (1):\penalty0 84--115, 2017.

\bibitem[Shiryaev et~al.(2013)Shiryaev, Yu, and
  Elkind]{shi-yu-elk:c:robustness}
D.~Shiryaev, L.~Yu, and E.~Elkind.
\newblock On elections with robust winners.
\newblock In \emph{Proceedings of AAMAS-13}, pages 415--422, 2013.

\bibitem[Walsh(2007)]{wal:c:uncertainty-in-preference-elicitation-aggregation}
T.~Walsh.
\newblock Uncertainty in preference elicitation and aggregation.
\newblock In \emph{Proceedings of the 22nd AAAI Conference on Artificial
  Intelligence}, pages 3--8, July 2007.

\bibitem[Xia(2012)]{xia:c:margin-of-victory}
L.~Xia.
\newblock Computing the margin of victory for various voting rules.
\newblock In \emph{Proceedings of EC-12}, pages 982--999, June 2012.

\bibitem[Yang(2017)]{yan:c:borda-control-sp}
Y.~Yang.
\newblock On the complexity of {Borda} control in single-peaked elections.
\newblock In \emph{Proceedings of AAMAS-17}, pages 1178--1186, 2017.

\bibitem[Yang(2020)]{yan:c:destructive-approval-bribery}
Y.~Yang.
\newblock On the complexity of destructive bribery in approval-based
  multi-winner voting.
\newblock In \emph{Proceedings of AAMAS-20}, pages 1584--1592, 2020.

\end{thebibliography}

    \clearpage
    \appendix
   
  \section*{Supplementary Material}

	\section{Adapting Proofs to Other Winning Models}\label{app:other_models}
    In the constructive case, we will adapt our results to the case in which the preferred candidate is in all committees. In the destructive case, after the adaptation, the despised candidate is considered to lose the election if there is at least one winning committee without this candidate
   \subsection{Constructive}
      \begin{description}
       \item[\Cref{thm:add-approvals-vi}:] Replace ``some   '' by ``all'' in the definition of $s$.
       \item[\Cref{thm:add-approvals-ci}:] Replace ``end up with highe'' by ``end at with at least as hig'' in the definition of $t_\ell$, and $t_r$, and adjust the definition Step 3 accordingly.
       \item[\Cref{thm:delete-approvals-vi}:] Definition of ``superior candidate'' must be adjusted to contain also candidates with score equals to the score of $p$. 
       \item[\Cref{thm:delete-approvals-ci}:] Add one more fixed approval to~$p$.
       \item[\Cref{thm:swap-approvals-to-p-ci}:] The definition of $S$ needs to be adjusted to contain also candidates with the same score as $p$. The rest remains the same.
       \item[\Cref{thm:swap-approvals-to-p-vi}:] The definition of dangerous candidates needs to be adjusted so that it also contains those with same score as $p$. Again, the rest can be kept. The rest remains the same.
       \item[\Cref{thm:swap-approvals-ci}:] Add another voter that only approves p. This way, the $\Rightarrow$ direction carries over ``as is.'' In the $\Leftarrow$ direction we have to observe that bribing this new voter brings nothing good. Indeed, it is the case since when we do so, then we do not have enough budget to do other changes that should decrease the scores of other candidates to be lower than the score of $p$.
       \item[\Cref{thm:vi-swaps}:] We add one more voter approving only~$p$ such that the voter cannot be bribed.

     \end{description}
   \subsection{Desctructive}
     \begin{description}
       \item[\Cref{thm:destructive-unrestricted}:] In the algorithmic results, modify defining groups of candidates to reflect that now it is enough that we have $k$ candidates (other than $d$) with score at least $t$ instead of at least $t+1$. The reduction is the same as that in~\Cref{thm:destructive-ci}.
       \item[\Cref{thm:destructive-vi}:] For the algorithmic result, we simply need to select $k$ candidates for wchich achieving the score of~$d$ is the cheapest. The reduction is the same as that in~
       \item[\Cref{thm:destructive-ci}:] In the case of adding approvals with arbitrary prices, candidate $d$ should have score~$4$ in point~(1) of the construction. The rest remains the same (up to some fluffy text). In the cases of swapping approvals away from~$d$ with arbitrary prices assuming CI and swapping approvals in the CI  domain with arbitrary prices: One should have~$\Delta$~voters supporting~$d$ (that cannot be bribed) instead of~$\Delta - 1$~of them. In the last case, swapping away from $d$ for unit prices in CI, the~$k'$ in the definition of the recurrent function needs to be changed to mean the number of candidates with score at least~$s^*$ (some details of the computation of the function have to be adjusted).
     \end{description}

\end{document}